\numberwithin{equation}{section}
\providecommand{\algorithmname}{Algorithm}
\newtheorem{theorem}{Theorem}[section]
\newtheorem{lem}{Lemma}[section]
\newcounter{hypA}
\newenvironment{hypA}{\refstepcounter{hypA}\begin{itemize}
  \item[({\bf A\arabic{hypA}})]}{\end{itemize}}
\newcounter{hypB}
\newcounter{hypD}
\newenvironment{hypD}{\refstepcounter{hypD}\begin{itemize}
 \item[({\bf D\arabic{hypD}})]}{\end{itemize}}
\date{}
\begin{document}

\begin{center}

{\Large \textbf{Bayesian Parameter Estimation for Partially Observed McKean-Vlasov Diffusions Using Multilevel Markov chain Monte Carlo}}

\vspace{0.5cm}

AJAY JASRA$^{1}$, \& AMIN WU$^{2}$

{\footnotesize $^{1}$School of Data Science,  The Chinese University of Hong Kong, Shenzhen,  Shenzhen,  CN.}\\
{\footnotesize $^{2}$Statistics Program, Computer, Electrical and Mathematical Sciences and Engineering Division, \\ King Abdullah University of Science and Technology, Thuwal, 23955-6900, KSA.} \\
{\footnotesize E-Mail:\,} \texttt{\emph{\footnotesize ajayjasra@cuhk.edu.cn}}, 
\texttt{\emph{\footnotesize amin.wu@kaust.edu.sa}}

\begin{abstract}
In this article we consider Bayesian estimation of static parameters for a class of partially observed McKean-Vlasov diffusion processes with discrete-time observations over a fixed time interval.  
This problem features several obstacles to its solution,  which include that the posterior density is numerically intractable in continuous-time,  even if the transition probabilities are available and even when one uses a time-discretization,  the posterior still cannot be used by adopting well-known computational methods such as Markov chain Monte Carlo (MCMC).  In this paper we provide a solution to this problem by using new MCMC algorithms which can solve the afore-mentioned issues.  This MCMC algorithm is extended to use multilevel Monte Carlo (MLMC) methods.  We prove convergence bounds on our parameter estimators and show that the MLMC-based MCMC algorithm reduces the computational cost to achieve a mean square error versus ordinary MCMC by an order of magnitude.  We numerically illustrate our results on two models.
\\
\bigskip
\noindent \textbf{Keywords}: Parameter estimation, Markov Chain Monte Carlo,  Mckean-Vlasov
stochastic differential equations,  Multilevel Monte Carlo. 
\end{abstract}

\end{center}

\section{Introduction}

We consider the problem of Bayesian parameter estimation for partially observed (PO) McKean-Vlasov (MV) stochastic differential equtions (SDE). MV SDEs find a wide variety of applications in mathematics,  physics and beyond; see for instance \cite{bald,crisan,fin,biol} and the references therein.  We will assume that data are observed in discrete time and are conditionally independent given the position of the MV SDE at the observation time.  Given an overall fixed time window we seek to estimate the parameters of the model using Bayesian methods.  Several approaches for parameter inference or computation associated to MV models can be found in the works \cite{po_mv,nadir,mv_pe,nik}, but to the best of our knowledge there does not seem to be any work on the Bayesian estimation of parameters associated to a partially observed MV SDE.

The posterior density of the unknown parameters and the states of the MV SDE at observations is typically unavailable as the transition measure of the MV process is seldom available.  This often means one must resort to a time-discretization,  for instance based on the Euler-Maruyama method.  However,  unlike the problem with regular SDEs (see e.g.~\cite{chada_ub,jasra_bpe_sde}) even in such a case the posterior density of the parameters and states is unavailable up-to a normalizing constant or a non-negative unbiased estimator,  which is a prerequisite of using advanced computational tools to sample from a probability density (see e.g.~\cite{ml_rev} and the references therein).  The main problem is the fact that the laws of the MV diffusion,  which are a key component of the drift and diffusion coefficients,  are unknown even when time-discretizing the diffusion.  A solution to this,  in the context of data problems has been used in \cite{po_mv},  which is to input a particle approximation, based on the work of \cite{basic_method} for the said laws and to then to continue with the problem of interest,  given this additional bias; this is the line of thinking which we follow in this article.

In this paper we consider  Markov chain Monte Carlo (MCMC) algorithms to sample from the posterior density of the unknown parameters and states associated to PO MV SDE,  when using two level of approximations. The first is time-discretization and the second a particle approximation of the laws of the MV SDE.  This leads to a non-trivial probability density for which we design particle MCMC \cite{andrieu} algorithms,  which are related to those in \cite{mv_pe}.  The problem features an element of time discretization and it is well known in the literature that multilevel Monte Carlo \cite{giles,giles1} can help to reduce the computational cost of approximating expectations,  in comparison to single time discretization methods.  Using the ideas that have been developed in \cite{stoch_vol,jasra_cont,jasra_bpe_sde,med} we show how our particle MCMC algorithm can be extended to leverage upon the ideas of MLMC.

Our contributions are as follows:
\begin{itemize}
\item{We develop MCMC algorithms for PO MV SDES}
\item{The MCMC methods are extended to the multilevel context}
\item{We prove a bound on the mean square error (MSE) associated to our multilevel MCMC approach}
\item{We illustrate our method on two examples.}
\end{itemize}
In the context of the third bullet point the main result is as follows.  Let $\epsilon>0$ be given.  To achieve a MSE of $\mathcal{O}(\epsilon^2)$ our multilevel MCMC algorithm costs $\mathcal{O}(\epsilon^{-6})$.  If one uses a single level (i.e.~the MCMC algorithm in bullet point 1) then the cost to achieve the same MSE rises to $\mathcal{O}(\epsilon^{-7})$; i.e.~we reduce the cost by an order of magnitude.  Although these costs are large,  they are related to the fact that one has to approximate the biased posterior by MCMC (iteration cost),  then there is a cost of approximating the laws of the MV SDE and finally there is a time-discretization error.  It seems that controlling all these errors simultaneously is difficult without significant cost.   We remark that to further reduce the cost, 
as in \cite{nadir} one could adopt a multi-index approach,  which is a subject of future work.

This article is structured as follows.  In Section \ref{sec:problem} we detail the estimation problem that is to be considered.  In Section \ref{sec:method} we present the methodology to be used.
In Section \ref{sec:theory} our theoretical results are given.  Section \ref{sec:numerics} features our numerical results.  The proofs associated to our theoretical results can be found in Appendix \ref{app:appendix}.

\section{Problem Formulation}\label{sec:problem}

\subsection{Model}

We consider the stochastic differential equation (SDE) with $X_0=x_0\in\mathbb{R}^d$, $\theta\in\Theta\subseteq\mathbb{R}^{d_{\theta}}$ fixed:
\begin{equation}\label{eq:sde}
dX_t = a_{\theta}\left(X_t,\overline{\xi}_{\theta}(X_t,\mu_{t,\theta})\right)dt + \sigma\left(X_t\right)dW_t
\end{equation}
where 
$$
\overline{\xi}_{\theta}(X_t,\mu_{t,\theta})  =  \int_{\mathbb{R}^d}\xi_{\theta}(X_t,x)\mu_{t,\theta}(dx)
$$
$\{W_t\}_{t\geq 0}$ is a standard $d-$dimensional Brownian motion,  for each $\theta\in\Theta$,  $\xi_{\theta}:\mathbb{R}^{2d}\rightarrow\mathbb{R}$, $a_{\theta}:\mathbb{R}^d\times\mathbb{R}\rightarrow\mathbb{R}^d$, $\sigma:\mathbb{R}^d\rightarrow\mathbb{R}^{d\times d}$, $\mu_{t,\theta}$ is the law of $X_t$ and $\mu_{0,\theta}(dx)=\delta_{\{x_0\}}(dx)$ (Dirac measure on the set $\{x_0\}$ the starting point is taken as fixed). Note that conceputually,  there is no issue to allow $\sigma$ to depend upon $\theta$.  However,  as we do not consider this in our numerical examples,  we leave $\sigma$ independent of $\theta$ for ease of notation.

We make the following assumption throughout the paper; it is known \cite{ver} that this ensures the existence of a strong solution.  Set $\mathcal{P}(\mathbb{R}^d)$  the probability measures on the measurable space $(\mathbb{R}^d,\mathcal{B}(\mathbb{R}^d))$ with $\mathcal{B}(\mathbb{R}^d)$ the Borel sets on $\mathbb{R}^d$. We write $\mathcal{C}_b^k(\mathbb{R}^{d_1},\mathbb{R}^{d_2})$ as the collection of $k-$times continuously differentiable functions from $\mathbb{R}^{d_1}$ to $\mathbb{R}^{d_2}$ with bounded derivatives of order 1 up-to $k$.  Also,   $\mathcal{B}_b(\mathbb{R}^{d_1},\mathbb{R}^{d_2})$ as the collection of measurable functions from $\mathbb{R}^{d_1}$ to $\mathbb{R}^{d_2}$ which are bounded.

\begin{hypD}
\begin{enumerate}
\item{For each $(\theta,\mu)\in\Theta\times\mathcal{P}(\mathbb{R}^d)$, $(a_{\theta}\left(\cdot,\overline{\xi}_{\theta}(\cdot,\mu)\right)
,\xi_{\theta}(\cdot))
\in\mathcal{C}_b^2(\mathbb{R}^{d+1},\mathbb{R}^d)\cap\mathcal{B}_b(\mathbb{R}^{d+1},\mathbb{R}^d)
\times\mathcal{C}_b^2(\mathbb{R}^{2d},\mathbb{R})\cap\mathcal{B}_b(\mathbb{R}^{2d},\mathbb{R})$.}
\item{$\sigma(\cdot)\in\mathcal{C}_b^2(\mathbb{R}^{d},\mathbb{R}^{d\times d})\cap
\mathcal{B}_b(\mathbb{R}^{d},\mathbb{R}^{d\times d})$.}
\item{Set $\Sigma(x)=\sigma(x)\sigma(x)^{\top}$, then for any $x\in\mathbb{R}^d$, $\Sigma(x)$ is positive definite.}
\end{enumerate}
\end{hypD}

We denote by $P_{\mu_{t-1,\theta},t,\theta}(x_{t-1},dx_t)$ the conditional law of $X_t$ (as given in \eqref{eq:sde}) given $\mathscr{F}_{t-1}$ (the natual filtration of the process), for $t\geq 1$; that is, the transition kernel over unit time.  We consider a discrete time observation
process $Y_1,Y_2,\dots$,  $Y_T$,  $Y_t\in\mathsf{Y}$, $t\in\mathbb{N}$,  that are assumed, for notational convenience, to be observed at unit times. Conditional on the position $X_t$, $t\in\mathbb{N}$ of \eqref{eq:sde}, the random variable $Y_t$ is assumed to be independent of all other random variables, with a bounded and positive probability density $G_{\theta}(x_t,y_t)$.

Let $\nu$ be a Lebesgue probability density on the space $\Theta$.   Then our objective is to sample from the probability measure:
\begin{equation}\label{eq:post}
\pi\left(d(\theta,x_{1:T})|y_{1:T}\right) = \frac{\left\{\prod_{k=1}^T G_{\theta}(x_k,y_k)P_{\mu_{k-1,\theta},k,\theta}(x_{k-1},dx_k)\right\}\nu(\theta)d\theta}
{
\int_{\Theta\times\mathbb{R}^{Td}}
\left\{\prod_{k=1}^tTG_{\theta}(x_k,y_k)P_{\mu_{k-1,\theta},k,\theta}(x_{k-1},dx_k)
\right\}
\nu(\theta)d\theta
}
\end{equation}
where $d\theta$ is the Lebesgue measure on $(\Theta,\mathscr{B}(\Theta))$ and $\mathscr{B}(\Theta)$ is the Borel $\sigma-$field generated by $\Theta$.  It is assumed that the denominator in \eqref{eq:post} is finite.

\subsection{Time Discretization}

In practice,  one cannot work with the continuous-time formulation above,  so we shall consider a standard time-discretization and an associated optimization problem.  This in turn will be approximated by using numerical methods,  which will be the topic of Section \ref{sec:method}.

Let $l\in\mathbb{N}_0=\mathbb{N}\cup\{0\}$ be given and set $\Delta_l=2^{-l}$.  We consider the first order Euler-Maruyama time discretization of \eqref{eq:sde} for $k\in\{0,1\dots,\Delta_l^{-1}T-1\}$,  $\widetilde{x}_0=x_0$ (the starting position of \eqref{eq:sde} is a known point $x_0$)
\begin{equation}\label{eq:sde_disc}
\widetilde{X}_{(k+1)\Delta_l} = \widetilde{X}_{k\Delta_l} + a_{\theta}\left(\widetilde{X}_{k\Delta_l},\overline{\xi}_{\theta}(\widetilde{X}_{k\Delta_l},\mu_{k\Delta_l,\theta}^l)\right)dt + \sigma\left(\widetilde{X}_{k\Delta_l}\right)[W_{(k+1)\Delta_l}-W_{k\Delta_l}]
\end{equation}
where $\mu_{k\Delta_l,\theta}^l$ denotes the law of the time discretized process at time $k\Delta_l$. 
Recall $P_{\mu_{t-1,\theta},t,\theta}(x_{t-1},dx_t)$ ($t\in\{1,\dots,T\}$) from the previous section: we denote by
$P_{\mu_{t-1,\theta}^l,t,\theta}^l(x_{t-1},dx_t)$
the conditional law of $\widetilde{X}_t$ (as given in \eqref{eq:sde_disc}); the time discretized transition kernel over unit time. 

We then focus upon sampling from
\begin{equation}\label{eq:post_disc}
\pi^l\left(d(\theta,x_{1:T})|y_{1:T}\right) = \frac{\left\{\prod_{k=1}^T G_{\theta}(x_k,y_k)P_{\mu_{k-1,\theta}^l,k,\theta}^l(x_{k-1},dx_k)\right\}\nu(\theta)d\theta}
{
\int_{\Theta\times\mathbb{R}^{Td}}
\left\{\prod_{k=1}^T G_{\theta}(x_k,y_k)P_{\mu_{k-1,\theta}^l,k,\theta}^l(x_{k-1},dx_k)
\right\}
\nu(\theta)d\theta
}
\end{equation}
Even with using Euler-Maruyama time discretization,  we will need to appeal to further approximation,  which will be simulation-based.  This leads us to the next section.

\section{Methodology}\label{sec:method}

\subsection{Introduction}

We now describe the methodology that we will use to obtain, under further approximation,
samples from \eqref{eq:post_disc}.
The main complication that we need to address is the fact that numerical simulation of \eqref{eq:sde_disc} is not feasible without the approximating the laws $\mu_{k\Delta_l,\theta}^l$.   Once this can be achieved,  we will focus upon a novel extension of the work in \cite{jasra_bpe_sde} (see also \cite{mv_pe}).
We begin in Section \ref{sec:law_approx} by describing the method of \cite{basic_method} for approximating $\mu_{k\Delta_l,\theta}^l$.  Then using the afore-mentioned approach,  we develop an MCMC method to sample from an approximation of \eqref{eq:post_disc}; the details are in Section \ref{sec:smooth_approx}. 
Then we show how the MCMC method can be extended to be used in a multilevel context in 
Section \ref{sec:smooth_pair_approx}).  The final approach is summarized in Section \ref{sec:fa}.
Throughout the section the level $l$ of time discretization will be fixed unless otherwise stated.

\subsection{Approximating the Laws}\label{sec:law_approx}

Throughout this section $\theta\in\Theta$ is fixed.
We now give the method of \cite{basic_method} for allowing one to approximate the law
$\mu_{t,\theta}^l$ for each $t\in\{1,\dots,T\}$.  We will need the notation that
$\mathcal{N}_d(c,\Sigma)$ denotes the $d-$dimensional Gaussian distribution with mean $c$ and covariance matrix $\Sigma$. $I_d$ is the $d\times d$
identity matrix and $\stackrel{\textrm{ind}}{\sim}$ denotes independently distributed as.
The approach is given in Algorithm \ref{alg:basic_method} from \cite{basic_method}.   
Note that in Step 2.~when $k=1$ the points $X_{t-1}^1,\dots,X_{t-1}^N$ are available via the empirical
measure that has to be specified in Step 1..
We remark that
Algorithm \ref{alg:basic_method} can be used to approximate expectations w.r.t.~$\mu_{t,\theta}^l$
and indeed on the discrete grid in-between time $t-1$ and $t$.

\begin{algorithm}[h]
\begin{enumerate}
\item{Input $l\in\mathbb{N}_0$ the level of discretization, $N\in\mathbb{N}$ the number of particles, $\theta\in\Theta$,  $t\in\{1,\dots,T\}$. If $t=1$ set $\mu_{0,\theta}^{l,N}(dx)=\delta_{\{x_0\}}(dx)$ otherwise input an empirical measure $\mu_{t-1,\theta}^{l,N}(dx)=\tfrac{1}{N}\sum_{i=1}^N\delta_{\{X_{t-1}^i\}}(dx)$. Set $k=1$.}
\item{For $i\in\{1,\dots,N\}$ generate:
\begin{align*}
X_{t-1+k\Delta_l}^i & =  X_{t-1+(k-1)\Delta_l}^i + a_{\theta}\left(X_{t-1+(k-1)\Delta_l}^i,\overline{\xi}_{\theta}(X_{t-1+(k-1)\Delta_l}^i,\mu_{t-1+(k-1)\Delta_l,\theta}^{l,N})\right) + \\ & \sigma\left(X_{t-1+(k-1)\Delta_l}^i\right)\left[W_{t-1+k\Delta_l}^i - W_{t-1+(k-1)\Delta_l}^i\right]
\end{align*}
where
\begin{eqnarray*}
\overline{\xi}_{\theta}(X_{t-1+(k-1)\Delta_l}^i,\mu_{t-1+(k-1)\Delta_l,\theta}^{l,N}) & = & \frac{1}{N}\sum_{j=1}^N \xi_{\theta}(X_{t-1+(k-1)\Delta_l}^i,X_{t-1+(k-1)\Delta_l}^j)\\
\mu_{t-1+(k-1)\Delta_l,\theta}^{l,N}(dx) & = & \frac{1}{N}\sum_{j=1}^N\delta_{\{X_{t-1+(k-1)\Delta_l}^j\}}(dx) \\
\left[W_{t-1+k\Delta_l}^i - W_{t-1+(k-1)\Delta_l}^i\right] & \stackrel{\textrm{ind}}{\sim} & \mathcal{N}_{d}(0,\Delta_l I_d).
\end{eqnarray*}
Set $k=k+1$, if $k=\Delta_l^{-1}+1$ go to step 3.~otherwise go to the start of step 2..}
\item{Output all the required laws $\mu_{t-1+\Delta_l,\theta}^N,\dots,\mu_{t,\theta}^N$.}
\end{enumerate}
\caption{Approximating the Laws when starting with a particle approximation at time $t-1$, $t\in\{1,\dots,T\}$.}
\label{alg:basic_method}
\end{algorithm}

\subsection{Markov Chain Monte Carlo Approach}\label{sec:smooth_approx}

\subsubsection{Modified Posterior}\label{sec:mod_targ}

We now describe,  by using the approach in Algorithm \ref{alg:basic_method},  an approximation
of the time-discretized posterior in \eqref{eq:post_disc}.  This is the probability measure that we will seek to sample from.
We write
$Q_{\mu,\theta}^l(x,dy)$ as the Gaussian Markov kernel on $(\mathbb{R}^d,\mathscr{B}(\mathbb{R}^d))$ associated to a $\Delta_l$ time step
of \eqref{eq:sde_disc},  with input measure $\mu\in\mathcal{P}(\mathbb{R}^d)$. That is,  for any $t\in\{1,\dots,T\}$, we have
$$
P_{\mu_{t-1,\theta}^l,t,\theta}^l(x_{t-1},dx_t) = \int_{\mathbb{R}^{d(\Delta_l^{-1}-1)}}
\prod_{k=1}^{\Delta_{l}^{-1}} Q_{\mu_{t-1+(k-1)\Delta_l,\theta}^l,\theta}^l(x_{t-1+(k-1)\Delta_l},dx_{t-1+k\Delta_l}).
$$
 Set $u_t=(x_{t-1+\Delta_l},\dots,x_t)$,   and write
$$
\overline{P}_{\mu_{t-1,\theta}^l,t,\theta}^l(x_{t-1},du_t) := \prod_{k=1}^{\Delta_{l}^{-1}} Q_{\mu_{t-1+(k-1)\Delta_l,\theta}^l,\theta}^l(x_{t-1+(k-1)\Delta_l},dx_{t-1+k\Delta_l})
$$
then the following probability measure on $(\Theta\times\mathsf{E}_l=\mathbb{R}^{d\Delta_l^{-1}},\mathscr{B}(\Theta\times\mathsf{E}_l))$
$$
\overline{\pi}^l\left(d(\theta,u_1,\dots,u_T)\right) := \frac{\left\{\prod_{k=1}^T G_{\theta}(x_k,y_k)
\overline{P}_{\mu_{k-1,\theta}^l,k,\theta}^l(x_{k-1},du_k)\right\}\nu(\theta)d\theta
}{\int_{\Theta\times\mathsf{E}_l^T}
\left\{\prod_{k=1}^T G_{\theta}(x_k,y_k)
\overline{P}_{\mu_{k-1,\theta}^l,k,\theta}^l(x_{k-1},du_k)\right\}\nu(\theta)d\theta
}
$$
admits  \eqref{eq:post_disc} as a marginal.
Ideally,  our objective would now to be to sample from $\overline{\pi}^l$.  The issue here is that of course we do not have access to the laws.

Consider sequentially sampling Algorithm \ref{alg:basic_method} from time
$1$ to time $T$ using the approximated laws from the previous time step,  to initialize the next time step; throughout $N$ is fixed and so is $\theta$.
At time $t$, write all the simulated random variables $(X_{t-1+\Delta_l}^1,\dots,X_{t-1+\Delta_l}^N),\dots,
(X_{t}^1,\dots,X_{t}^N)$ from Algorithm \ref{alg:basic_method} as $\overline{u}_t=u_t^{1:N}\in\mathsf{E}_l^N$.
Finally write the joint probability of $(\overline{u}_1,\dots,\overline{u}_T)$ as $\overline{\mathbb{P}}_{\theta}^N$
and associated expectation $\overline{\mathbb{E}}_{\theta}^N$. 
We will write for $t\in\{1,\dots,T\}$
$$
\overline{P}_{\mu_{t-1,\theta}^{l,N},t,\theta}^l(x_{t-1},du_t) =  \prod_{k=1}^{\Delta_{l}^{-1}} Q_{\mu_{t-1+(k-1)\Delta_l,\theta}^{l,N},\theta}^l(x_{t-1+(k-1)\Delta_l},dx_{t-1+k\Delta_l}).
$$
We will now consider an MCMC method to sample from
$$
\overline{\pi}^{l,N}\left(d(\theta,u_1,\dots,u_T)\right) = \frac{
\left\{\prod_{k=1}^T G_{\theta}(x_k,y_k)\right\}
\overline{\mathbb{E}}_{\theta}^N\left[\prod_{k=1}^T \overline{P}_{\mu_{k-1,\theta}^{l,N},t,\theta}^l(x_{k-1},du_k)
\right]\nu(\theta)d\theta
}{
\int_{\Theta\times \mathsf{E}_l^T}
\left\{\prod_{k=1}^T G_{\theta}(x_k,y_k)\right\}
\overline{\mathbb{E}}_{\theta}^N\left[
\prod_{k=1}^T\overline{P}_{\mu_{k-1,\theta}^{l,N},t,\theta}^l(x_{k-1},du_k)
\right]\nu(\theta)d\theta
}.
$$

\subsubsection{MCMC Method}

Our MCMC method is given in Algorithm \ref{alg:pmmh_0} which needs as an input the particle filter in 
Algorithm \ref{alg:cond_pf_0}.  
In Algorithm \ref{alg:pmmh_0} step 4.~$\mathcal{U}_{[0,1]}$ is the uniform distribution on $[0,1]$.
The approach is simply the particle marginal Metropolis-Hastings method
that was originally in \cite{andrieu}, except that it is adapted to the case here.  The particle filter that we
use is simply that of \cite{po_mv}. 
Using the approach that was developed in \cite{andrieu},  one can show that under minimal conditions that the sampling from $K_{l}$  (the kernel in Algorithm \ref{alg:pmmh_0}) provides a means to obtain samples from $\overline{\pi}^{l,N}_{\theta}$.  
We remark the cost of applying
$K_{l,\theta}$ is $\mathcal{O}(\Delta_l^{-1}N^2T)$ in step 2.~of Algorithm  \ref{alg:cond_pf_0} and then
the rest of the algorithm has a cost of $\mathcal{O}(\Delta_l^{-1}NMT)$.

On identifying an initial state of $\theta,u_{1:T}$,  for instance sampling $\theta$ from the prior and then running Algorithm \ref{alg:cond_pf_0},  one can iteratively apply Algorithm \ref{alg:pmmh_0}.  Denote the sample
values as $\theta(j),u_{1:T}(j)$ at iteration $j$,  where $j\in\{0,1,\dots,I\}$, then one can approximate, for $\varphi:\Theta\times\mathbb{R}^{Td}\rightarrow\mathbb{R}$,  and assuming it is well-defined:
$$
\int_{\Theta\times\mathsf{E}_l^T}\varphi(\theta,x_{1:T}) \overline{\pi}^{l,N}\left(d(\theta,u_1,\dots,u_T)\right)
$$
by using
\begin{equation}\label{eq:mcmc_est}
\overline{\pi}^{l,N,I}(\varphi) := \frac{1}{I+1}\sum_{j=0}^I \varphi(\theta(j),x_{1:T}(j)).
\end{equation}

\begin{algorithm}[h]
\begin{enumerate}
\item{Input $l\in\mathbb{N}_0$ the level of discretization, $N\in\mathbb{N}$ the number of particles for the approach of Algorithm \ref{alg:basic_method}, $M$ the number of particles for the main method and $\theta\in\Theta$. }
\item{Sample $(\overline{u}_1,\dots,\overline{u}_T)$ from $\overline{\mathbb{P}}_{\theta}^N$ via a sequential
application of Algorithm \ref{alg:basic_method}, with $N$ particles.}
\item{Initialize: Sample $U_1^i$ independently from $
\prod_{k=1}^{\Delta_{l}^{-1}} Q_{\mu_{(k-1)\Delta_l,\theta}^{l,N},\theta}^l(x_{(k-1)\Delta_l},dx_{k\Delta_l})$, 
$A_{0}^i=i$ for $i\in\{1,\dots,M\}$. Set $p_{\theta}^{l,M,N}(y_{-1})=1$, $k=1$.}
\item{Resampling: Construct the probability mass function on $\{1,\dots,M\}$:
$$
r_1^i = \frac{G_{\theta}(x_k^i,y_k)}{\sum_{j=1}^MG_{\theta}(x_k^j,y_k)}.
$$
For $i\in\{1,\dots,M\}$ sample $A_k^i$ from $r_1^i$. Set 
$p_{\theta}^{l,M,N}(y_{1:k}) = p_{\theta}^{l,M,N}(y_{1:k-1})\tfrac{1}{M} \sum_{j=1}^MG_{\theta}(x_k^j,y_k)$ and
$k=k+1$.
}
\item{Sampling: for $i\in\{1,\dots,M\}$ sample $U_{k}^i|u_{k-1}^{A_{k-1}^i}$ using the kernel 
$$ 
\prod_{k=1}^{\Delta_{l}^{-1}} Q_{\mu_{t-1+(k-1)\Delta_l,\theta}^{l,N},\theta}^l(x_{t-1+(k-1)\Delta_l},dx_{t-1+k\Delta_l})
$$
where $x_{t-1}^i=x_{t-1}^{A_{k-1}^i}$.  
For $i\in\{1,\dots,M\}$, $(u_{1}^i,\dots,u_k^i)=(u_1^{A_{k-1}^i},\dots,u_{k-1}^{A_{k-1}^i},u_{k}^i)$. 
If $k=T$ go to 6.  otherwise return to the start of 4..}
\item{Construct the probability mass function on $\{1,\dots,M\}$:
$$
r_1^i = \frac{G_{\theta}(x_T^i,y_T)}{\sum_{j=1}^M G_{\theta}(x_T^j,y_T)}.
$$
Set $p_{\theta}^{l,M,N}(y_{1:k}) = p_{\theta}^{l,M,N}(y_{1:k-1})\tfrac{1}{M} \sum_{j=1}^MG_{\theta}(x_k^j,y_k)$.
Sample $i\in\{1,\dots,M\}$ using $r_1$ and return $(u_1^i,\dots,u_T^i)$,  $p_{\theta}^{l,M,N}(y_{1:T})$.}
\end{enumerate}
\caption{Particle Filter at level $l\in\mathbb{N}_0$.}
\label{alg:cond_pf_0}
\end{algorithm}

\begin{algorithm}[h]
\begin{enumerate}
\item{Input $l\in\mathbb{N}_0$ the level of discretization, $N\in\mathbb{N}$ the number of particles for the approach of Algorithm \ref{alg:basic_method}, $M$ the number of particles for the main method and $\theta\in\Theta$,  $p_{\theta}^{l,M,N}(y_{1:T})$, $(u_1,\dots,u_T)\in\mathsf{E}_l^T$. }
\item{Propose $\theta'|\theta$ from a proposal kernel $r_2(\theta,\theta')d\theta'$.}
\item{Run Algorithm \ref{alg:cond_pf_0} with the given $\theta'$ and other input parameters.}
\item{Generate $Z\sim\mathcal{U}_{[0,1]}$
and return $(\theta',p_{\theta'}^{l,M,N}(y_{1:T}))$ and $(u_1',\dots,u_T')$ if
$$
Z<\min\left\{
1, \frac{p_{\theta'}^{l,M,N}(y_{1:T})\nu(\theta')r_2(\theta',\theta)}
{
p_{\theta}^{l,M,N}(y_{1:T})\nu(\theta)
r_2(\theta,\theta')}
\right\}
$$
otherwise return $(\theta,p_{\theta}^{l,M,N}(y_{1:T}))$ and $(u_1,\dots,u_T)$.
}
\end{enumerate}
\caption{Particle Marginal Metropolis Kernel at level $l\in\mathbb{N}_0$.}
\label{alg:pmmh_0}
\end{algorithm}

\subsection{Bi-Level MCMC}\label{sec:smooth_pair_approx}

\subsubsection{Set-Up}

Let $l\in\mathbb{N}$ be fixed.
Let $\varphi:\Theta\times\mathbb{R}^{Td}\rightarrow\mathbb{R}$ now consider the approximation of
the following formula,  assuming it is well-defined:
\begin{equation}\label{eq:diff}
\int_{\Theta\times\mathsf{E}_l^T}\varphi(\theta,x_{1:T})
\overline{\pi}^{l,N_l}\left(d(\theta,u_1,\dots,u_T)\right)
-
\int_{\Theta\times\mathsf{E}_{l-1}^T}\varphi(\theta,x_{1:T})
\overline{\pi}^{l-1,N_{l}}\left(d(\theta,u_1,\dots,u_T)\right)
\end{equation}
with $N_{l}\in\mathbb{N}$.  Differences of this type are critical to using multilevel Monte Carlo (MLMC) approximation,  which is our objective.  Typically, to use MLMC,  one seeks to sample
from a coupling of $\overline{\pi}^{l,N_l}$ and $\overline{\pi}^{l-1,N_{l}}$ and specifically a MCMC kernel which could achieve such a simulation.  In our context,  we will sample from a joint probability measure on the space
$(\Theta\times\mathsf{E}_{l}^T\times\mathsf{E}_{l-1}^T,\mathscr{B}(\Theta\times\mathsf{E}_{l}^T\times\mathsf{E}_{l-1}^T))$ which by an appropriate change of measure formula will allow us to approximate \eqref{eq:diff}
using Markov chain Monte Carlo.  This is the approach that was adopted in \cite{jasra_bpe_sde},  except it is modified to the context here.

In order to proceed,  for a given $t\in\{1,\dots,T\}$,  we will need a method to approximate the laws
$\mu_{t-1+\Delta_l,\theta}^l,\dots,\mu_{t,\theta}^l$ and simultaneously $\mu_{t-1+\Delta_{l-1},\theta}^{l-1},\dots,\mu_{t,\theta}^{l-1}$;
to avoid confusion we will denote the latter as $\widetilde{\mu}_{t-1+\Delta_l,\theta}^{l-1},\dots,\widetilde{\mu}_{t,\theta}^{l-1}$.  A method for approximating these measures is given in 
Algorithm \ref{alg:basic_method_coup},  which was originally developed in \cite{po_mv}.  The approach given in 
Algorithm \ref{alg:basic_method_coup} is simply a coupled version of Algorithm \ref{alg:basic_method}.  The reason for using a dependent coupling is to help ensure that the estimates of \eqref{eq:diff} to be developed will have a variance that is suitably falling as $l$ grows;  this is the core principal of using MLMC.

We now seek to use an MCMC method to approximate \eqref{eq:diff},  which will follow the ideas that were originally proposed in \cite{jasra_bpe_sde}. 
Consider sequentially sampling Algorithm \ref{alg:basic_method_coup} from time
$1$ to time $T$ using the approximated laws from the previous time step,  to initialize the next time step; throughout $N_l$ is fixed and so is $\theta$.  We will write the joint probability of 
$(\overline{u}_1^l,\dots,\overline{u}_T^l)$ and
$(\widetilde{\overline{u}}_1^{l-1},\dots,\widetilde{\overline{u}}_T^{l-1})$
as $\check{\mathbb{P}}_{\theta}^{N_{l}}$
and associated expectation $\check{\mathbb{E}}_{\theta}^{N_{l}}$.   Here we have
$\overline{u}_t=u_t^{1:N_l}\in\mathsf{E}_l^{N_l}$ and
$\widetilde{\overline{u}}_t=\widetilde{u}_t^{1:N_{l}}\in\mathsf{E}_{l-1}^{N_{l}}$.  Now for $(x,x',\theta)\in\mathbb{R}^{2d}\times\Theta$ and $k\in\{1,\dots,T\}$ set
\begin{equation}\label{eq:hk_ch}
H_{k,\theta}(x,x') = \tfrac{1}{2}\left\{G_{\theta}(x',y_k)+G_{\theta}(x,y_k)\right\}.
\end{equation}
The next idea we need is the simulation of a synchronous coupling of two first order Euler-Maruyama time discretizations of 
$$
 \overline{P}_{\mu_{t-1,\theta}^{l,N_l},t,\theta}^l(x_{t-1},du_t)
\quad \textrm{and} \quad
 \overline{P}_{\mu_{t-1,\theta}^{l-1,N_{l}},t,\theta}^{l-1}(x_{t-1},du_t)
$$
where the input probability measures are simulated via Algorithm \ref{alg:basic_method_coup} independently
of all other random variables represented in the displayed equation. This has,  in effect, been described in 
Algorithm \ref{alg:basic_method_coup},  except of course we need only one particle ($N_l=N_{l-1}=1$ in Algorithm \ref{alg:basic_method_coup}) and the empirical measures used to approximate the law of the MV SDE need not be updated,  as they have been approximated and plugged in already.  Therefore we will write such a simulation of $(u_t^l,u_{t}^{l-1})\in\mathsf{E}_l\times\mathsf{E}_{l-1}$
conditonal upon $(u_{t-1}^l,u_{t-1}^{l-1})$ (where $(u_{0}^l,u_{0}^{l-1})=(x_0,x_0))$ as the kernel $\check{P}^{l}_{\theta}$. 
Then we define the following probability measure on $\left(\Theta\times(\mathsf{E}_l\times\mathsf{E}_{l-1})^T,\mathscr{B}(\Theta\times(\mathsf{E}_l\times\mathsf{E}_{l-1})^T)\right)$
$$
\check{\pi}^{l,N_{l}}\left(d(\theta,u_{1:T}^l,\widetilde{u}_{1:T}^{l-1})\right) = 
$$
\begin{equation}\label{eq:main_tar}
\frac{\left\{\prod_{k=1}^T H_{k,\theta}(x_k^l,\widetilde{x}_k^{l-1})\right\}
\check{\mathbb{E}}_{\theta}^{N_{l}}\left[
\prod_{k=1}^T \check{P}_{\theta}^l\left((x_{k-1}^l,\widetilde{x}_{k-1}^{l-1}),d(u_k^l,\widetilde{u}_{k}^{l-1})\right)
\right]\nu(\theta)d\theta
}{
\int_{\Theta\times(\mathsf{E}_l\times\mathsf{E}_{l-1})^T}
\left\{\prod_{k=1}^T H_{k,\theta}(x_k^l,\widetilde{x}_k^{l-1})\right\}
\check{\mathbb{E}}_{\theta}^{N_{l}}\left[
\prod_{k=1}^T \check{P}_{\theta}^l\left((x_{k-1}^l,\widetilde{x}_{k-1}^{l-1}),d(u_k^l,\widetilde{u}_{k}^{l-1})\right)
\right]\nu(\theta)d\theta
}.
\end{equation}

Given the infomation above,  we have that \eqref{eq:diff} is equal to
$$
\frac{\int_{\Theta\times(\mathsf{E}_l\times\mathsf{E}_{l-1})^T}\varphi(\theta,x_{1:T}^l)
\left\{\prod_{k=1}^T \check{H}_{k,\theta}(x_k^l,\widetilde{x}_{k}^{l-1})\right\}
\check{\pi}^{l,N_{l}}\left(d(\theta,u_{1:T}^l,\widetilde{u}_{1:T}^{l-1})\right)
}
{\int_{\Theta\times(\mathsf{E}_l\times\mathsf{E}_{l-1})^T}
\left\{\prod_{k=1}^T \check{H}_{k,\theta}(x_k^l,\widetilde{x}_{k}^{l-1})\right\}
\check{\pi}^{l,N_{l}}\left(d(\theta,u_{1:T}^l,\widetilde{u}_{1:T}^{l-1})\right)} - 
$$
\begin{equation}\label{eq:main_eq}
\frac{\int_{\Theta\times(\mathsf{E}_l\times\mathsf{E}_{l-1})^T}\varphi(\theta,\widetilde{x}_{1:T}^{l-1})
\left\{\prod_{k=1}^T \check{H}_{k,\theta}(\widetilde{x}_{k}^{l-1},x_k^l)\right\}
\check{\pi}^{l,N_{l}}\left(d(\theta,u_{1:T}^l,\widetilde{u}_{1:T}^{l-1})\right)
}
{\int_{\Theta\times(\mathsf{E}_l\times\mathsf{E}_{l-1})^T}
\left\{\prod_{k=1}^T \check{H}_{k,\theta}(\widetilde{x}_{k}^{l-1},x_k^l)\right\}
\check{\pi}^{l,N_{l}}\left(d(\theta,u_{1:T}^l,\widetilde{u}_{1:T}^{l-1})\right)}
\end{equation}
where,  for $(x,x',\theta)\in\mathbb{R}^{2d}\times\Theta$ and $k\in\{1,\dots,T\}$
$$
\check{H}_{k,\theta}(x,x') = \frac{G_{\theta}(x,y_k)}{H_{k,\theta}(x,x')}.
$$
Given \eqref{eq:main_eq} the objective is then to obtain an MCMC algorithm to sample from 
\eqref{eq:main_tar}.  Before we describe the MCMC method,  we pause to make a couple of remarks.
The basic motivation is to try and sample from a coupling of $\overline{\pi}^{l,N_l}$ and $\overline{\pi}^{l-1,N_{l}}$ using MCMC.  However, to the best of our knowledge,  with the exception of sampling the independent coupling - which is not useful,  there is no MCMC method to achieve this goal.  The idea in 
\cite{jasra_bpe_sde} was then to sample from a joint probability that ensures that there is some coupling between the $(u_{1:T}^l,\widetilde{u}_{1:T}^{l-1})$ in manner which allows a change of measure,  as is used in 
\eqref{eq:main_eq} which can ensure that the weight functions:
$$
\left\{\prod_{k=1}^T \check{H}_{k,\theta}(\widetilde{x}_{k}^{l-1},x_k^l)\right\}
$$
have a stable variance (w.r.t.~$T$ and $l$) and facilitate the efficient application of the MLMC paradigm.
The choice \eqref{eq:hk_ch} is the same as used in \cite{chada_ub} and some reasoning can be found there.

\begin{algorithm}[h]
\begin{enumerate}
\item{Input $l\in\mathbb{N}$ the level of discretization, $N_l\in\mathbb{N}$ the number of particles, $\theta\in\Theta$
,  $t\in\{1,\dots,T\}$. If $t=1$ set $\mu_{0,\theta}^{l,N_{l}}(dx)=\widetilde{\mu}_{0,\theta}^{l-1,N_{l}}(dx)=\delta_{\{x_0\}}(dx)$ otherwise input a pair of empirical measures $\mu_{t-1,\theta}^{l,N_l}(dx)=\tfrac{1}{N_l}\sum_{i=1}^{N_l}\delta_{\{X_{t-1}^{l,i}\}}(dx)$, 
$\widetilde{\mu}_{t-1,\theta}^{l-1,N_{l}}(dx)=\tfrac{1}{N_{l}}\sum_{i=1}^{N_{l}}\delta_{\{\widetilde{X}_{t-1}^{l-1,i}\}}(dx)$.  Set $k=1$.}
\item{For $i\in\{1,\dots,N_l\}$ generate:
\begin{align*}
X_{t-1+k\Delta_l}^{l,i} &  =  X_{t-1+(k-1)\Delta_l}^{l,i} + a_{\theta}\left(X_{t-1+(k-1)\Delta_l}^{l,i},\overline{\xi}_{\theta}(X_{t-1+(k-1)\Delta_l}^{l,i},\mu_{t-1+(k-1)\Delta_l,\theta}^{l,N_l})\right) + \\ & \sigma\left(X_{t-1+(k-1)\Delta_l}^{l,i}\right)\left[W_{t-1+k\Delta_l}^i - W_{t-1+(k-1)\Delta_l}^i\right]
\end{align*}
where
\begin{eqnarray*}
\overline{\xi}_{\theta}(X_{t-1+(k-1)\Delta_l}^i,\mu_{t-1+(k-1)\Delta_l}^{l,N_l}) & = & \frac{1}{N_l}\sum_{j=1}^{N_l} \xi_{\theta}(X_{t-1+(k-1)\Delta_l}^{l,i},X_{t-1+(k-1)\Delta_l}^{l,j})\\
\mu_{t-1+(k-1)\Delta_l,\theta}^{l,N_l}(dx) & = & \frac{1}{N_l}\sum_{j=1}^{N_l}\delta_{\{X_{t-1+(k-1)\Delta_l}^{l,j}\}}(dx).
\end{eqnarray*}
Set $k=k+1$, if $k=\Delta_l^{-1}+1$ go to step 3.~otherwise go to the start of step 2..}
\item{For $i\in\{1,\dots,N_{l}\}$ compute:
\begin{align*}
\widetilde{X}_{t-1+k\Delta_{l-1}}^{l-1,i} & =  \widetilde{X}_{t-1+(k-1)\Delta_{l-1}}^{l-1,i} + a_{\theta}\left(\widetilde{X}_{t-1+(k-1)\Delta_{l-1}}^{l-1,i},\overline{\xi}_{\theta}(\widetilde{X}_{t-1+(k-1)\Delta_{l-1}}^{l-1,i},\widetilde{\mu}_{t-1+(k-1)\Delta_{l-1},\theta'}^{l-1,N_{l}})\right) + \\ & \sigma\left(\widetilde{X}_{t-1+(k-1)\Delta_{l-1}}^{l-1,i}\right)\left[W_{t-1+k\Delta_{l-1}}^i - W_{t-1+(k-1)\Delta_{l-1}}^i\right]
\end{align*}
where
\begin{eqnarray*}
\overline{\xi}_{\theta}(\widetilde{X}_{t-1+(k-1)\Delta_{l-1}}^{l-1,i},\widetilde{\mu}_{t-1+(k-1)\Delta_{l-1},\theta'}^{l-1,N_l}) & = & \frac{1}{N_{l}}\sum_{j=1}^{N_{l}} \xi_{\theta}(\widetilde{X}_{t-1+(k-1)\Delta_{l-1}}^{l-1,i},\widetilde{X}_{t-1+(k-1)\Delta_{l-1}}^{l-1,j})\\
\widetilde{\mu}_{t-1+(k-1)\Delta_{l-1},\theta'}^{l-1,N_{l}}(dx) & = & \frac{1}{N_{l}}\sum_{j=1}^{N_{l}}\delta_{\{\widetilde{X}_{t-1+(k-1)\Delta_l}^{l-1,j}\}}(dx)
\end{eqnarray*}
and the increments of the Brownian motion $\left[W_{t-1+k\Delta_{l-1}}^i - W_{t-1+(k-1)\Delta_{l-1}}^i\right]$ were generated in step 2..
Set $k=k+1$, if $k=\Delta_{l-1}^{-1}+1$ go to step 4.~otherwise go to the start of step 3..}
\item{Output all the required laws $\mu_{t-1+\Delta_l,\theta}^{l,N_l},\dots,\mu_{t,\theta}^{l,N_l}$,  $\widetilde{\mu}_{t-1+\Delta_l,\theta}^{l-1,N_{l}},\dots,\widetilde{\mu}_{t,\theta}^{l-1,N_{l}}$.}
\end{enumerate}
\caption{Approximating the Consecutive Laws when starting with a particle approximation at time $t-1$, $t\in\{1,\dots,T\}$.}
\label{alg:basic_method_coup}
\end{algorithm}

\subsubsection{Algorithm}

Our MCMC method for sampling from \eqref{eq:main_tar} is now presented.  The Markov kernel is given in Algorithm \ref{alg:pmmh_l} (denoted as $\check{K}_{l}$) which requires the delta particle filter (as termed in \cite{chada_ub}) in Algorithm \ref{alg:delta_pf} to be run.  We note that the cost of applying the kernel in Algorithm \ref{alg:pmmh_l} is $\mathcal{O}(\Delta_l^{-1}T\{MN_l + N_l^2\})$.

We can initialize our MCMC algorithm by sampling $\theta$ from the prior and then running Algorithm \ref{alg:delta_pf} and then iteratively applying Algorithm \ref{alg:pmmh_l}.  Denote the sample
values as $\theta^l(j),u_{1:T}(j)^l,\widetilde{u}_{1:T}^{l-1}(j)$ at iteration $j$,  where $j\in\{0,1,\dots,I\}$, then one can approximate \eqref{eq:main_eq} using the following:
$$
\overline{\pi}^{l,N_l,I}(\varphi) - \overline{\pi}^{l-1,N_{l},I}(\varphi) := 
$$
\begin{equation}\label{eq:mcmc_bl_est}
\frac{
\frac{1}{I+1}\sum_{j=0}^I\varphi(\theta^l(j),x_{1:T}^{l}(j))
\left\{\prod_{k=1}^T \check{H}_{k,\theta^l(j)}(x_k^l(j),\widetilde{x}_{k}^{l-1}(j))\right\}
}{\frac{1}{I+1}\sum_{j=0}^I
\left\{\prod_{k=1}^T \check{H}_{k,\theta^l(j)}(x_k^l(j),\widetilde{x}_{k}^{l-1}(j))\right\}} -
\frac{
\frac{1}{I+1}\sum_{j=0}^I\varphi(\theta^l(j),\widetilde{x}_{1:T}^{l-1}(j))
\left\{\prod_{k=1}^T \check{H}_{k,\theta^l(j)}(\widetilde{x}_{k}^{l-1}(j),x_k^l(j))\right\}
}{\frac{1}{I+1}\sum_{j=0}^I
\left\{\prod_{k=1}^T \check{H}_{k,\theta^l(j)}(\widetilde{x}_{k}^{l-1}(j),x_k^l(j))\right\}}.
\end{equation}

\begin{algorithm}[h]
\begin{enumerate}
\item{Input $l\in\mathbb{N}$ the level of discretization, $N_l\in\mathbb{N}$ the number of particles for the approach of Algorithm \ref{alg:basic_method_coup}, $M$ the number of particles for the main method and $\theta\in\Theta$. }
\item{Sample $(\overline{U}_{1:T}^l,\widetilde{\overline{U}}^{l-1}_{1:T})$ from 
$\check{\mathbb{P}}_{\theta}^{N_{l-1:l}}$ via a sequential
application of Algorithm \ref{alg:basic_method_coup}, with $(N_l,N_{l-1})$ particles.}
\item{Initialize: Sample $(U_1^{i,l},\widetilde{U}_1^{i,l-1})$ independently from $\check{P}_{\theta}^{l}\left((x_0,x_0), d(u_1^l,\widetilde{u}_1^{l-1})\right)$,  where the input measures have been generated in Step 2., 
$A_{0}^i=i$ for $i\in\{1,\dots,M\}$. Set $p_{\theta}^{l,M,N_{l}}(y_{-1})=1$, $k=1$.}
\item{Resampling: Construct the probability mass function on $\{1,\dots,M\}$:
$$
r_1^i = \frac{H_{k,\theta}(x_k^{i,l},\widetilde{x}_k^{i,l-1})}{\sum_{j=1}^MH_{k,\theta}(x_k^{j,l},\widetilde{x}_k^{j,l-1})}.
$$
For $i\in\{1,\dots,M\}$ sample $A_k^i$ from $r_1^i$. Set 
$p_{\theta}^{l,M,N_{l}}(y_{1:k}) = p_{\theta}^{l,M,N_{l}}(y_{1:k-1})\tfrac{1}{M} \sum_{j=1}^M
H_{k,\theta}(x_k^{j,l},\widetilde{x}_k^{j,l-1})
$ and
$k=k+1$.
}
\item{Sampling: for $i\in\{1,\dots,M\}$ sample $U_{k}^{i,l},\widetilde{U}_k^{i,l-1}|u_{k-1}^{A_{k-1}^i,l},\widetilde{u}_{k-1}^{A_{k-1}^i,,l-1}$ using the kernel 
$$\check{P}_{\theta}^{l}\left((x_{k-1}^{A_{k-1}^i,l},\widetilde{x}_{k-1}^{A_{k-1}^i,l-1}), d(u_k^l,\widetilde{u}_k^{l-1})\right)$$
where the input measures have been generated in Step 2..  
For $i\in\{1,\dots,M\}$, $(u_{1}^{i,l},\dots,u_k^{i,l})=(u_1^{A_{k-1}^i,l},\dots,u_{k-1}^{A_{k-1}^i,l},u_{k}^{i,l})$
and 
$(\widetilde{u}_{1}^{i,l-1},\dots,\widetilde{u}_k^{i,l-1})=(\widetilde{u}_1^{A_{k-1}^i,l-1},\dots,\widetilde{u}_{k-1}^{A_{k-1}^i,l-1},\widetilde{u}_{k}^{i,l-1})$.
If $k=T$ go to 6.  otherwise return to the start of 4..}
\item{Construct the probability mass function on $\{1,\dots,M\}$:
$$
r_1^i = \frac{H_{T,\theta}(x_T^{i,l},\widetilde{x}_T^{i,l-1})}{\sum_{j=1}^M H_{T,\theta}(x_T^{j,l},\widetilde{x}_T^{j,l-1})}.
$$
Set $p_{\theta}^{l,M,N_{l}}(y_{1:k}) = p_{\theta}^{l,M,N_{l}}(y_{1:k-1})\tfrac{1}{M} \sum_{j=1}^M
H_{T,\theta}(x_T^{j,l},\widetilde{x}_T^{j,l-1})$.
Sample $i\in\{1,\dots,M\}$ using $r_1$ and return $(u_1^{i,l},\dots,u_T^{i,l-1})$,  
$(\widetilde{u}_1^{i,l},\dots,\widetilde{u}_T^{i,l-1})$,  
$p_{\theta}^{l,M,N_{l}}(y_{1:T})$.}
\end{enumerate}
\caption{Delta Particle Filter at level $l\in\mathbb{N}$.}
\label{alg:delta_pf}
\end{algorithm}

\begin{algorithm}[h]
\begin{enumerate}
\item{Input $l\in\mathbb{N}$ the level of discretization, $N_l\in\mathbb{N}$ the number of particles for the approach of Algorithm \ref{alg:basic_method_coup}, $M$ the number of particles for  
Algorithm \ref{alg:delta_pf}
and $\theta\in\Theta$,  $p_{\theta}^{l,M,N_{l}}(y_{1:T})$, $(u_1^l,\dots,u_T^l)\in\mathsf{E}_l^T$,  
$(\widetilde{u}_1^{l-1},\dots,\widetilde{u}_T^{l-1})\in\mathsf{E}_{l-1}^T$.
}
\item{Propose $\theta'|\theta$ from a proposal kernel $r_2(\theta,\theta')d\theta'$.}
\item{Run Algorithm \ref{alg:delta_pf} with the given $\theta'$ and other input parameters.}
\item{Generate $Z\sim\mathcal{U}_{[0,1]}$
and return $(\theta',p_{\theta'}^{l,M,N_{l}}(y_{1:T}))$ and $(u_1^{',l},\dots,u_T^{',l})$ 
$(\widetilde{u}_1^{',l-1},\dots,\widetilde{u}_T^{',l-1})$ 
if
$$
Z<\min\left\{
1, \frac{p_{\theta'}^{l,M,N_{l}}(y_{1:T})\nu(\theta')r_2(\theta',\theta)}
{
p_{\theta}^{l,M,N_{l}}(y_{1:T})\nu(\theta)
r_2(\theta,\theta')}
\right\}
$$
otherwise return $(\theta,p_{\theta}^{l,M,N_{l}}(y_{1:T}))$ and $(u_1^l,\dots,u_T^l)$, 
$(\widetilde{u}_1^{l-1},\dots,\widetilde{u}_T^{l-1})$.
}
\end{enumerate}
\caption{Bi-Level Particle Marginal Metropolis Kernel at level $l\in\mathbb{N}$.}
\label{alg:pmmh_l}
\end{algorithm}

\subsection{Final Algorithm}\label{sec:fa}

The way in which we proceed is then as follows.
\begin{enumerate}
\item{Run Algorithm \ref{alg:pmmh_0} at some level $l_{\star}\in\mathbb{N}$ for $I_{l_{\star}}$ iterations with $N_{l_{\star}}$
samples for Algorithm \ref{alg:basic_method}.}
\item{Independently of step 1.~and independently for each $l\in\{l_{\star}+1,\dots,L\}$ run Algorithm \ref{alg:pmmh_l} for $I_l$ iterations with $N_{l}$
samples for Algorithm \ref{alg:basic_method_coup}.}
\end{enumerate}
Let $\varphi:\Theta\times\mathbb{R}^{Td}\rightarrow\mathbb{R}$ and set
$$
\pi(\varphi) := \int_{\Theta\times\mathbb{R}^{Td}}\varphi(\theta,x_{1:T}) \pi\left(d(\theta,x_{1:T})|y_{1:T}\right)
$$
which we assume is finite.  Then the estimator of $\pi(\varphi)$ that we will use is
$$
\widehat{\pi(\varphi)} := \overline{\pi}^{l_{\star},N_{l_{\star}},I_{l_{\star}}}(\varphi) + \sum_{l=l_{\star}}^L\left\{
\overline{\pi}^{l,N_l,I_l}(\varphi) - \overline{\pi}^{l-1,N_{l},I_l}(\varphi)
\right\}
$$
where $\overline{\pi}^{l_{\star},N_{l_{\star}},I_{l_{\star}}}(\varphi)$ is as computed in \eqref{eq:mcmc_est}
and $\overline{\pi}^{l,N_l,I_l}(\varphi) - \overline{\pi}^{l-1,N_{l},I_l}(\varphi)$ is as described in \eqref{eq:mcmc_bl_est}.  We remark that there are several errors and biases including the time-discretization,  the use of
particle approaches in Algorithms \ref{alg:basic_method} and \ref{alg:basic_method_coup} as well as the MCMC bias.  We will show by a suitable selection of $L$,  $I_{l_{\star}},\dots,I_L$ and $N_{l_{\star}},\dots,N_L$ how these errors and biases can be controlled; this is the topic of the next section.

\section{Theory}\label{sec:theory}

We now start by giving the main mathematical result of the article.   The assumptions and technical results can be found in Appendix \ref{app:appendix}. $\mathbb{E}$ denotes the expectation w.r.t.~the process that is described in Section \ref{sec:fa}. Note that $T$ is suppressed in the notation,  but in practice the constant $C$ below would explode exponentially in $T$.

\begin{theorem}\label{theo:main_res}
Assume (A\ref{ass:1}-\ref{ass:3}). Then
for any  $\varphi\in\mathcal{C}_b^2(\Theta\times\mathbb{R}^{dT})\cap\mathcal{B}_b(\Theta\times\mathbb{R}^{dT})$ there exists a $C<+\infty$
such that for any $(l_{\star},L,N_{l_{\star}},I_{l_{\star}},\dots,N_L,I_L)\in\mathbb{N}^{2(L-l_{\star})+4}$ with $l_{\star}<L$
$$
\mathbb{E}\left[
\left(
\widehat{\pi(\varphi)} - \pi(\varphi)
\right)^2
\right]  \leq 
C
\left(\frac{1}{I_{l_{\star}}+1} +
\sum_{l=l_{\star}+1}^L\frac{\Delta_l}{I_l+1}
+ \sum_{(l,q)\in\mathsf{A}_{l_{\star},L}}\frac{\Delta_l^{1/2}}{I_l+1}\frac{\Delta_q^{1/2}}{I_q+1}
+ \frac{1}{N_{l_{\star}}}
+ \left(\sum_{l=l_{\star}+1}^L\frac{\Delta_l^{1/2}}{N_l^{1/2}}\right)^2 + \Delta_L^2
\right)
$$
where $\mathsf{A}_{l_{\star},L}=\{(l,q)\in\{l_{\star}+1,\dots,L\}:l\neq q\}$.
\end{theorem}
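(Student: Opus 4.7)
The plan is to start from the standard bias--variance decomposition
\[
\mathbb{E}\bigl[(\widehat{\pi(\varphi)}-\pi(\varphi))^2\bigr] = \bigl(\mathbb{E}[\widehat{\pi(\varphi)}]-\pi(\varphi)\bigr)^2 + \mathrm{Var}\bigl(\widehat{\pi(\varphi)}\bigr),
\]
and to bound the two pieces separately, exploiting crucially that the $L-l_{\star}+1$ Markov chains used to form $\hat A_{l_{\star}} := \overline{\pi}^{l_{\star},N_{l_{\star}},I_{l_{\star}}}(\varphi)$ and $\hat D_l := \overline{\pi}^{l,N_l,I_l}(\varphi)-\overline{\pi}^{l-1,N_l,I_l}(\varphi)$ are mutually independent by construction (steps 1 and 2 of the final algorithm in Section \ref{sec:fa}).

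For the variance, independence gives
\[
\mathrm{Var}\bigl(\widehat{\pi(\varphi)}\bigr) = \mathrm{Var}(\hat A_{l_{\star}}) + \sum_{l=l_{\star}+1}^{L} \mathrm{Var}(\hat D_l).
\]
I would bound each summand via a standard MCMC variance inequality (e.g.\ via geometric ergodicity of the particle marginal Metropolis--Hastings kernels $K_{l_{\star}}$ and $\check K_{l}$, which follows under (A1)--(A3) in the Appendix). For $\hat A_{l_{\star}}$ this yields the rate $C/(I_{l_{\star}}+1)$. For $\hat D_l$, the key observation is that the synchronous coupling built into Algorithm \ref{alg:basic_method_coup} together with the choice of weight $\check H_{k,\theta}$ forces the fluctuations of each realisation of the ratio estimator in \eqref{eq:mcmc_bl_est} to be $O(\Delta_l^{1/2})$; combining this with ergodicity of $\check K_l$ gives $\mathrm{Var}(\hat D_l)=O(\Delta_l/(I_l+1))$. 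This reproduces the first two terms in the claimed bound.

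For the bias, I would write
\[
\mathbb{E}[\widehat{\pi(\varphi)}]-\pi(\varphi) = \bigl(\mathbb{E}[\hat A_{l_{\star}}]-\pi(\varphi)\bigr) + \sum_{l=l_{\star}+1}^{L} \mathbb{E}[\hat D_l],
\]
and decompose each summand into (i) the MCMC bias at the current level, (ii) the particle bias coming from replacing the exact laws $\mu^l_{\cdot,\theta}$ by their empirical approximations $\mu^{l,N_l}_{\cdot,\theta}$, and (iii) the time-discretisation bias, which collapses through the idealised measures $\overline{\pi}^{l}$ to the single residual $\pi^{L}(\varphi)-\pi(\varphi)=O(\Delta_L)$ by a standard weak-error estimate for Euler--Maruyama. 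Squaring the total bias and expanding then yields: a diagonal part, which combined with the particle-error rates $O(1/N_{l_{\star}}^{1/2})$ at the coarsest level and $O(\Delta_l^{1/2}/N_l^{1/2})$ at each coupled level produces the $1/N_{l_{\star}}$ and $(\sum \Delta_l^{1/2}/N_l^{1/2})^2$ terms and the $\Delta_L^2$ term; and cross-products $\sum_{l\neq q} b_l b_q$ of MCMC biases at distinct levels that reproduce the sum over $\mathsf{A}_{l_{\star},L}$, provided the coupled MCMC bias at level $l$ is of order $\Delta_l^{1/2}/(I_l+1)$.

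The main obstacle is justifying the rate $|\mathbb{E}[\hat D_l]-(\overline{\pi}^{l,N_l}(\varphi)-\overline{\pi}^{l-1,N_l}(\varphi))| = O(\Delta_l^{1/2}/(I_l+1))$ together with the particle bias of the coupled difference being $O(\Delta_l^{1/2}/N_l^{1/2})$, because $\hat D_l$ is a difference of two ratio estimators in \eqref{eq:mcmc_bl_est}. I would control this via a Taylor/Lipschitz linearisation around the stationary ratios, using (a) uniform-in-$l$ moment bounds on the weights $\check H_{k,\theta}$ and $\prod_k \check H_{k,\theta}$ (which hold under (A1)--(A3) by boundedness of $G_{\theta}$ and its log-Lipschitz structure inherited from (D1)--(D3)), and (b) the $L_p$ strong error of the synchronous coupling of the two Euler--Maruyama schemes, which is $O(\Delta_l^{1/2})$ per particle and, after propagating the interaction through Algorithm \ref{alg:basic_method_coup}, translates into an empirical-law error of order $O(\Delta_l^{1/2}/N_l^{1/2})$ in the sense required. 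With these two estimates in hand, assembling the three pieces above delivers exactly the bound stated in Theorem \ref{theo:main_res}.
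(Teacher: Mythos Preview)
Your proposal is correct and relies on the same underlying estimates as the paper (the coupled strong error $O(\Delta_l^{1/2})$, the particle-law error rates $O(N_l^{-1/2})$ and $O(\Delta_l^{1/2}N_l^{-1/2})$, the MCMC bias rate $O(\Delta_l^{1/2}/(I_l+1))$, and uniform ergodicity of the kernels). The organisation, however, differs. The paper does \emph{not} start from a bias--variance split; instead it inserts the intermediate quantities $\overline{\pi}^{l,N_l}(\varphi)$ and $\overline{\pi}^{l}(\varphi)$ and applies the $C_2$-inequality five times to obtain five terms $T_1,\dots,T_5$ (coarse-level MCMC error, coupled-level MCMC error, coarse-level particle bias, coupled-level particle bias, time-discretisation bias), each handled by a dedicated lemma (Lemmata \ref{lem:lem1}--\ref{lem:lem5}). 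In particular, the cross-term over $\mathsf{A}_{l_{\star},L}$ in the paper arises by expanding $T_2=\mathbb{E}\bigl[(\sum_l\{\hat D_l-D_l\})^2\bigr]$ and using independence to factor the off-diagonal expectations, whereas in your route it appears when squaring the total bias. The two decompositions are equivalent in content: your variance term corresponds to the diagonal part of $T_1$ and $T_2$, and your squared-bias pieces recover $T_3$, $T_4$, $T_5$ and the off-diagonal part of $T_2$. Your bias--variance organisation is arguably cleaner in exploiting independence once, while the paper's $C_2$ decomposition has the advantage of isolating each error source in a self-contained lemma without tracking cross-terms between different bias types.
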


\begin{proof}
For $\varphi\in\mathcal{B}_b(\Theta\times\mathbb{R}^{Td})$,  we will write:
\begin{eqnarray*}
\overline{\pi}^l(\varphi) & := & \int_{\Theta\times\mathsf{E}_l^T}\varphi(\theta,x_{1:T})
\overline{\pi}^{l}\left(d(\theta,u_1,\dots,u_T)\right) \\
\overline{\pi}^{l,N_l}(\varphi) & := &
\int_{\Theta\times\mathsf{E}_l^T}\varphi(\theta,x_{1:T})
\overline{\pi}^{l,N_l}\left(d(\theta,u_1,\dots,u_T)\right).
\end{eqnarray*}
Using the $C_2-$inequality five times we have that
$$
\mathbb{E}\left[
\left(
\widehat{\pi(\varphi)} - \pi(\varphi)
\right)^2
\right]  \leq C\sum_{j=1}^5 T_j
$$
where $C=2^5$,  
\begin{eqnarray*}
T_1 & = & \mathbb{E}\left[\left(\overline{\pi}^{l_{\star},N_{l_{\star}},I_{l_{\star}}}(\varphi)-
\overline{\pi}^{l_{\star},N_{l_{\star}}}(\varphi)
\right)^2\right] \\
T_2 & = & 
\mathbb{E}\left[\left(
\sum_{l=l_{\star}}^L\left\{
\overline{\pi}^{l,N_l,I_l}(\varphi) - \overline{\pi}^{l-1,N_{l},I_l}(\varphi) - 
\{
\overline{\pi}^{l,N_l}(\varphi) - \overline{\pi}^{l-1,N_{l}}(\varphi)
\}
\right\}
\right)^2\right]\\
T_3 & = & \left(\overline{\pi}^{l_{\star},N_{l_{\star}}}(\varphi) - \overline{\pi}^{l_{\star}}(\varphi)\right)^2\\
T_4 & = & \left\{\sum_{l=l_{\star}+1}^L
\left\{
\overline{\pi}^{l,N_l}(\varphi) - 
\overline{\pi}^{l-1,N_l}(\varphi) -
\left\{\overline{\pi}^{l}(\varphi) - 
\overline{\pi}^{l-1}(\varphi)
\right\}
\right\}\right\}^2\\
T_5 & = & \left(\overline{\pi}^{L}(\varphi) - \overline{\pi}(\varphi)\right)^2.
\end{eqnarray*}
$T_1$ and $T_5$ can be controlled using standard results for uniformly ergodic Markov chains and weak errors of Euler-Maruyama time discretizations; the bounds are $\mathcal{O}((I_{l_{\star}}+1)^{-1})$ and $\mathcal{O}(\Delta_L^2)$ respectively.  The term $T_2$ is controlled in Lemmata \ref{lem:lem4}-\ref{lem:lem5} in the Appendix,  $T_3$ is bounded in Lemma \ref{lem:lem2} and $T_4$ in Lemma \ref{lem:lem1}.  This completes the proof of the theorem.
\end{proof}

The consequence of this result is follows. Let $\epsilon>0$ be given.  Supposing that $M=\mathcal{O}(T)$ in Algorithms \ref{alg:cond_pf_0}
and \ref{alg:delta_pf} as is often the case in practice (e.g.~\cite{andrieu}) and then ignoring the cost associated to the total observation time $T$,  
one has a cost of the final algorithm (as in Section \ref{sec:fa}) of $\mathcal{O}\left(\sum_{l=l_{\star}}^LI_l\Delta_l^{-1}N_l^2\right)$.
If one supposes the cost of the level $l_{\star}$ is neglibible (which is an underlying principle of the MLMC framework) one can set $I_l=\mathcal{O}(\epsilon^{-2}\Delta_l^{6/7})$,  $N_l=\mathcal{O}(\epsilon^{-2}\Delta_l^{1/2})$ and $L=\mathcal{O}(|\log(\epsilon)|)$ which would give an upper-bound on the MSE
(i.e.~$\mathbb{E}\left[\left(\widehat{\pi(\varphi)} - \pi(\varphi)\right)^2\right]$) of 
$\mathcal{O}(\epsilon^2)$ for a cost of 
$\mathcal{O}(\epsilon^{-6})$.
If one simply used one level $L$ along with Algorithm \ref{alg:pmmh_0},  then to achieve an MSE of 
$\mathcal{O}(\epsilon^2)$ it would have a cost of 
$\mathcal{O}(\epsilon^{-7})$.  That is to say that our using a multilevel method reduces the cost by an order of magnitude.

\section{Numerical Simulations}\label{sec:numerics}

\subsection{Models}
\subsubsection{Kuramoto Model}
We examine the Kuramoto model \(\{X_t\}_{t \geq 0}\) defined by the McKean-Vlasov stochastic differential equation:
\[
dX_t = \left( \theta + \int \sin(X_t - y) \, d\mu_t(y) \right) dt + \sigma dW_t, \quad X_0 = x_0 \in \mathbb{R}, \; t \in [0, T],
\]
where \(\mu_t\) represents the distribution of \(X_t\), \(\theta\) is a random variable, \(\sigma > 0\), and \(T \in \mathbb{N}\). The Kuramoto model is a prominent mean field game model that is often utilized for numerical simulations.

\subsubsection{Modified Kuramoto Model}

We also consider the process \(\{ X_t \}_{t \geq 0}\) defined by the McKean-Vlasov stochastic differential equation:
\[
dX_t = \left( \theta + \int \sin(X_t - y) \, d\mu_t(y) \right) dt + \left( \frac{\sigma}{1 + X_t^2} \right) dW_t, \quad X_0 = x_0 \in \mathbb{R}, \; t \in [0, T],
\]
where \( \mu_t \) is the distribution of \( X_t \), \( \theta \) is a random variable, \( \sigma > 0 \), and \( T \in \mathbb{N} \). We refer to this as the modified Kuramoto model due to the use of a non-constant diffusion coefficient.\\

\subsection{Simulation Results}

We conduct simulations for both models with observations \(\{Y_k\}_{k=1}^{T}\) at unit time intervals, where \( Y_k | X_k = x_k \sim \mathcal{N}(x_k, \tau^2) \) for \( k \in \{1, \ldots, T\} \) with the starting point  \( x_0 = 1 \), and \( T = 100 \).  Here $\mathcal{N}(x,\tau)$ is a normal distribution of mean $x$ and variance $\tau$.
The true values of the unknown parameters are set to be \( \theta = 0 \), \( \sigma = 0.2 \) and \( \tau = 1 \).  

Independent Gaussian priors are assumed for \(\theta\), \(\log(\sigma)\), and \(\log(\tau)\). Gaussian random walk proposals with a diagonal covariance matrix are used for MCMC moves, with the sampler tuned to ensure good mixing.  We choose the simulation parameters as in Section \ref{sec:theory}.


Figures \ref{fig:conv_KU} and \ref{fig:conv_MKU} illustrate the convergence performance of the PMCMC and MLPMCMC for the Kuramoto Model and the Modified Kuramoto Model, respectively, considering discretization levels up to \(l = 5\). With $5000$ samples, both models exhibit generally good mixing across the two samplers.

We examined the logarithmic relationship between computational cost and mean squared error (MSE) reduction for both models. Tables \ref{tab:rate_KU} and \ref{tab:rate_MKU} present the estimated rates, revealing values close to \(-3\) for the multilevel method (indicating the cost of $\mathcal{O}(\epsilon^{-6})$)
and \(-3.5\) for the single-level method (indicating the cost of $\mathcal{O}(\epsilon^{-7})$) for each parameter. The observed rates are consistent with theoretical expectations, although some deviation is attributable to errors in distribution estimation.

\begin{figure}[htbp]
    \centering
    \includegraphics[scale=0.35, trim={2.5cm 3cm 2.5cm 3cm},clip]{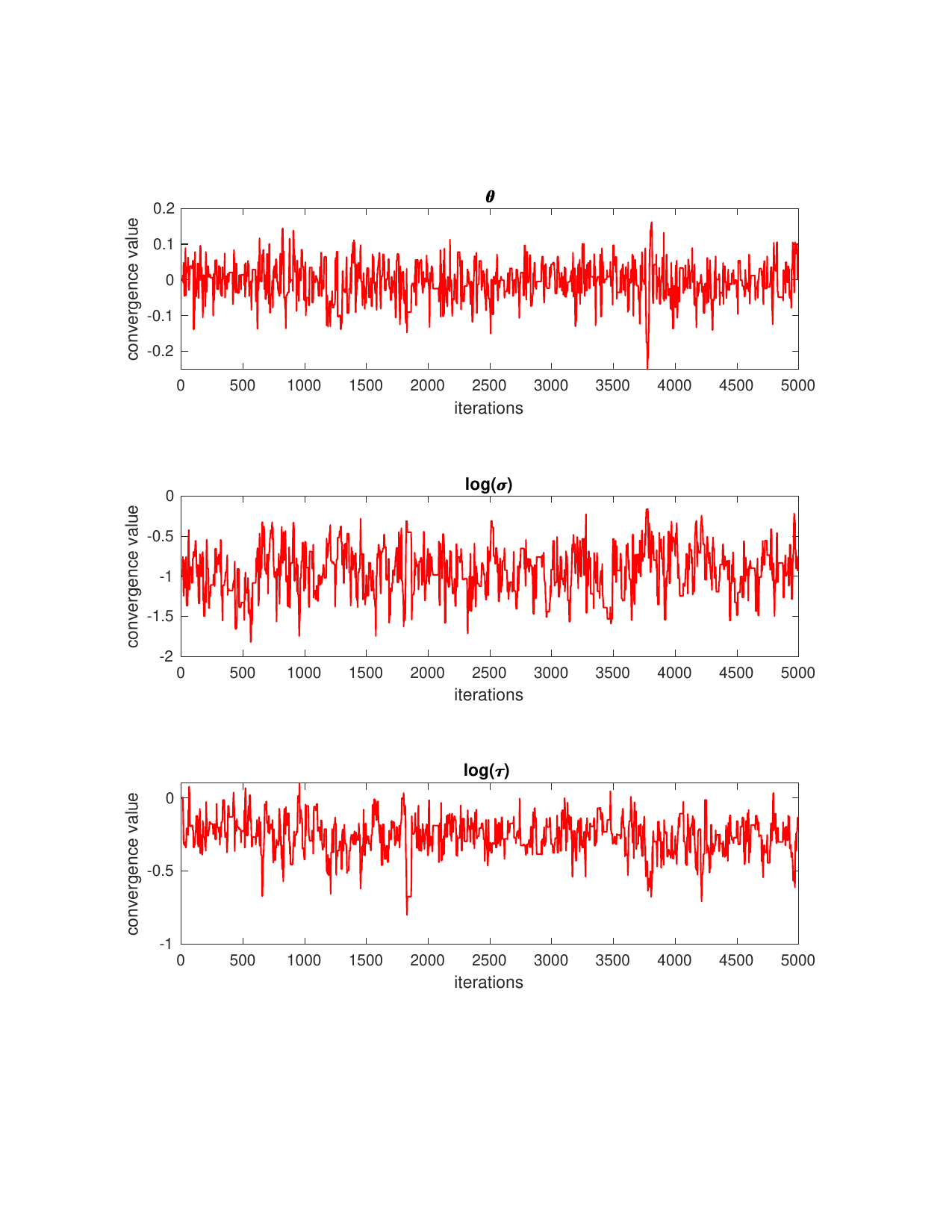}
    \includegraphics[scale=0.35, trim={2.5cm 3cm 2.5cm 3cm},clip]{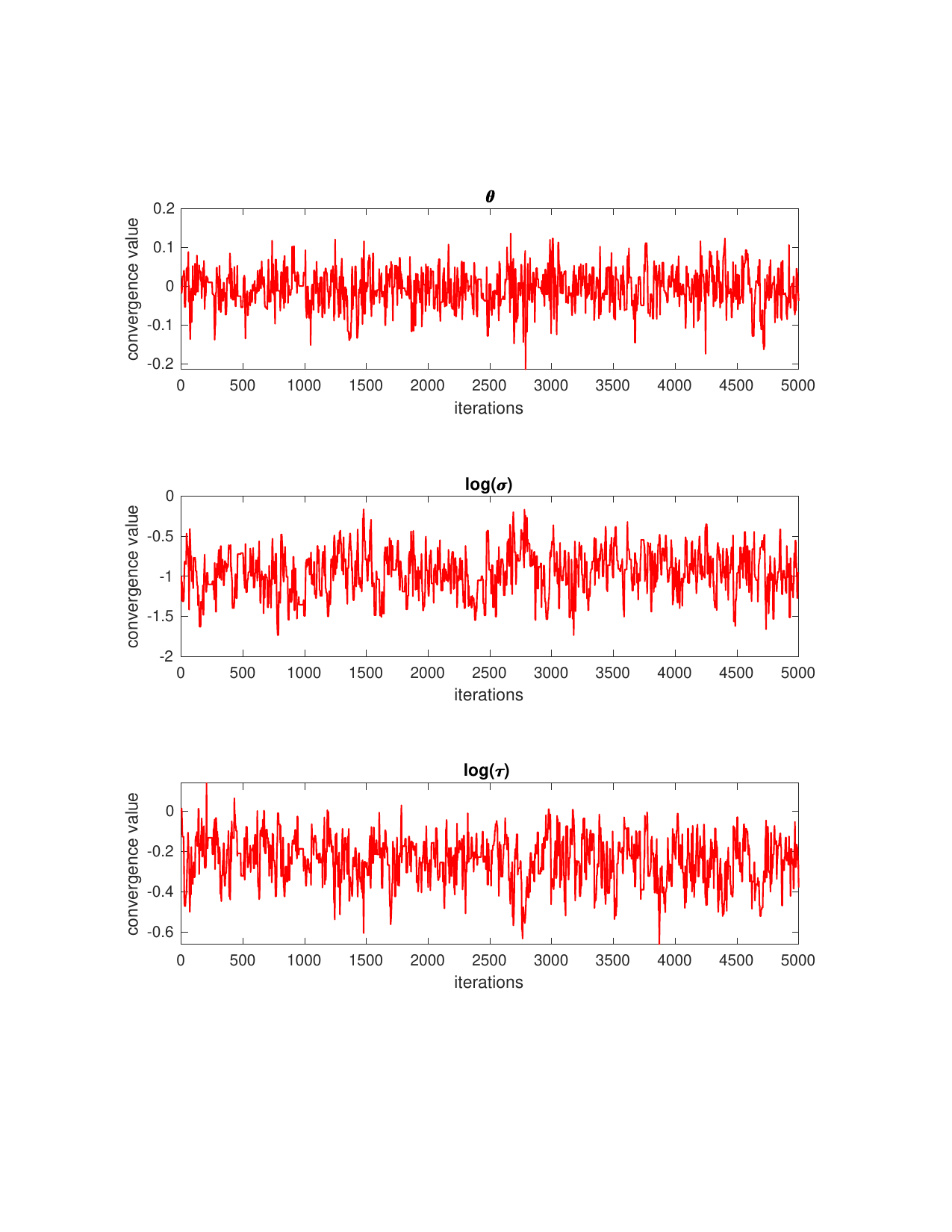}
    \caption{The convergence plot of the PMCMC (left) and MLPMCMC (right) for the Kuramoto model.}
    \label{fig:conv_KU}
\end{figure}

\begin{figure}[htbp]
    \centering
    \includegraphics[scale=0.35, trim={2.5cm 3cm 2.5cm 3cm},clip]{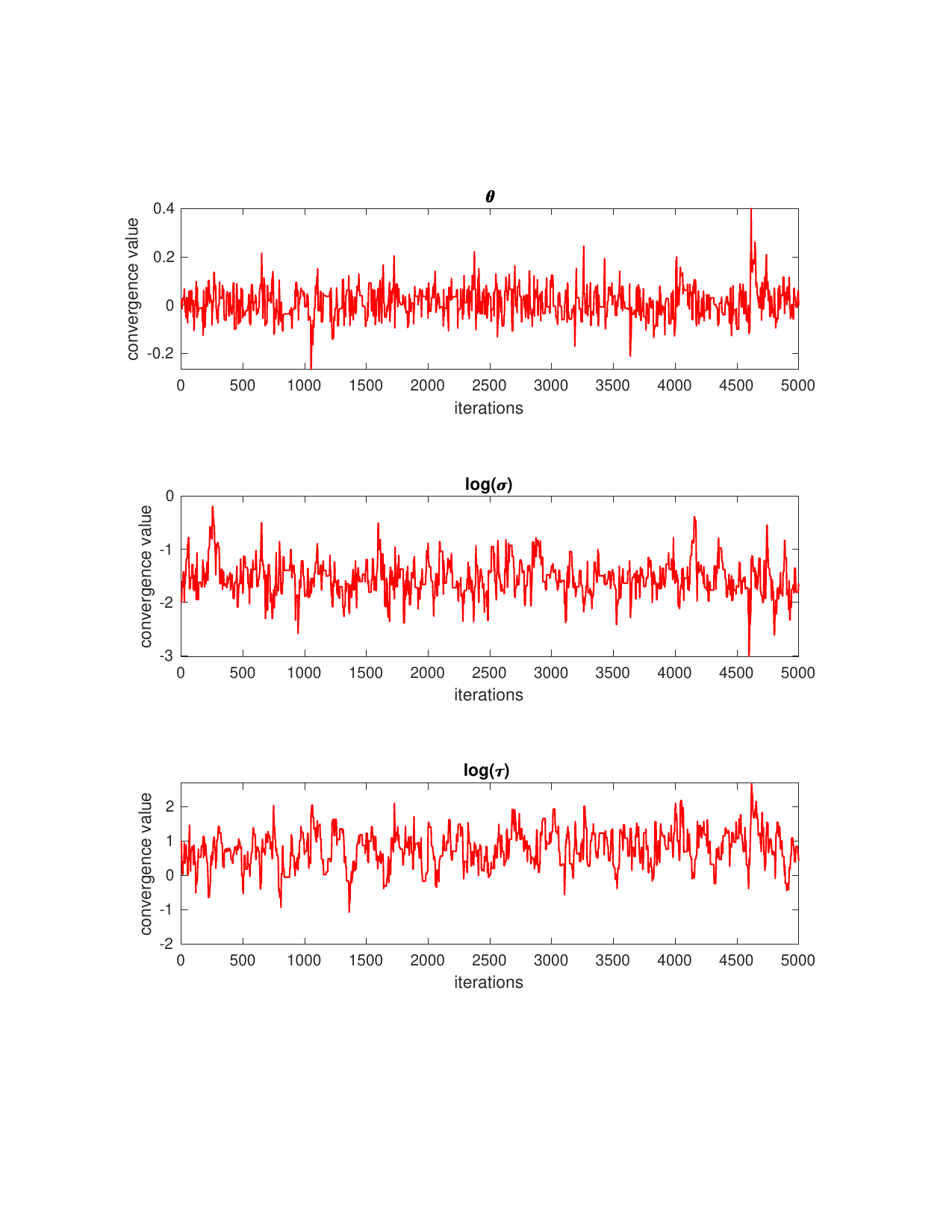}
    \includegraphics[scale=0.35, trim={2.5cm 3cm 2.5cm 3cm},clip]{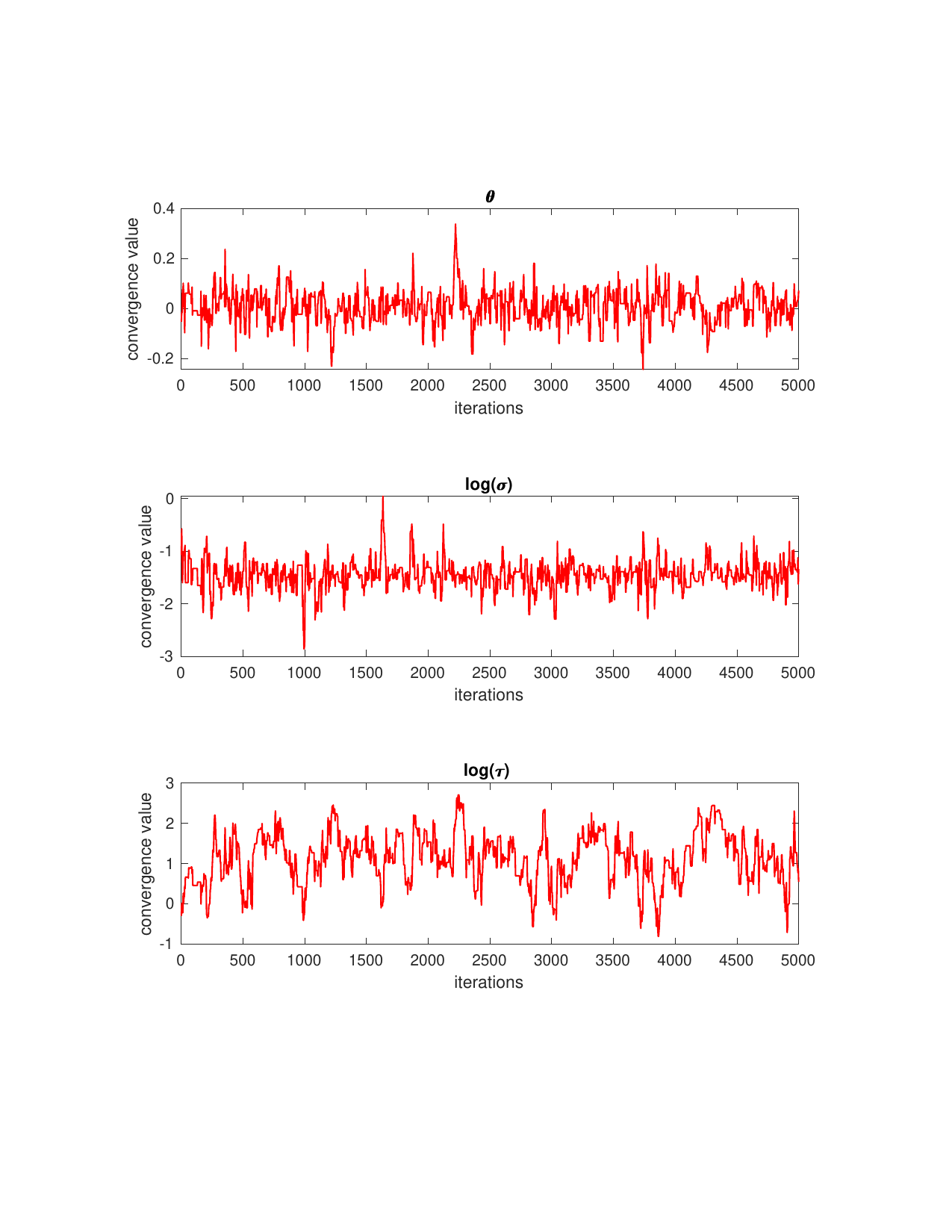}
    \caption{The convergence plot of the PMCMC (left) and MLPMCMC (right) for the Modified Kuramoto model.}
    \label{fig:conv_MKU}
\end{figure}

\begin{table}[h!]
    \centering
    \caption{Kuramoto Model: Estimated Log Cost against Log MSE.}
    \begin{tabular}{c|c|c}
    Parameter & PMCMC & MLPMCMC \\
    \hline
    $\theta$ & -3.33 & -2.71 \\
    $\log(\sigma)$ & -3.69 & -3.25 \\
    $\log(\tau)$ & -3.60 & -3.17\\
    \end{tabular}
    \label{tab:rate_KU}
\end{table}

\begin{table}[h!]
    \centering
    \caption{Modified Kuramoto Model: Estimated Log Cost against Log MSE.}
    \begin{tabular}{c|c|c}
    Parameter & PMCMC & MLPMCMC \\
    \hline
    $\theta$ & -3.52 & -2.94 \\
    $\log(\sigma)$ & -3.56 & -2.75 \\
    $\log(\tau)$ & -3.85 & -2.90\\
    \end{tabular}
    \label{tab:rate_MKU}
\end{table}

\subsubsection*{Acknowledgements}

AJ was supported by CUHK-SZ start-up funding.

\appendix

\section{Technical Proofs}\label{app:appendix}

We make the following assumptions.

\begin{hypA}\label{ass:1}
\begin{enumerate}
\item{For each $(\theta,\mu)\in\Theta\times\mathcal{P}(\mathbb{R}^d)$, $(a_{\theta}\left(\cdot,\overline{\xi}_{\theta}(\cdot,\mu)\right)
,\xi_{\theta}(\cdot))
\in\mathcal{C}_b^2(\mathbb{R}^{d+1},\mathbb{R}^d)\cap\mathcal{B}_b(\mathbb{R}^{d+1},\mathbb{R}^d)
\times\mathcal{C}_b^2(\mathbb{R}^{2d},\mathbb{R})\cap\mathcal{B}_b(\mathbb{R}^{2d},\mathbb{R})$.
All upper-bounds of the functionals are true for every $\theta\in\Theta$.
}
\item{$\sigma(\cdot)\in\mathcal{C}_b^2(\mathbb{R}^{d},\mathbb{R}^{d\times d})\cap
\mathcal{B}_b(\mathbb{R}^{d},\mathbb{R}^{d\times d})$.}
\item{Set $\Sigma(x)=\sigma(x)\sigma(x)^{\top}$, then for any $x\in\mathbb{R}^d$, $\Sigma(x)$ is positive definite.}
\end{enumerate}
\end{hypA}

\begin{hypA}\label{ass:2}
For every $y\in \mathsf{Y}$ the function $(\theta,x)\mapsto G_{\theta}(x,y)$ belongs to $\mathcal{C}_b^2(\Theta\times\mathbb{R}^d)\cap\mathcal{B}_b(\Theta\times\mathbb{R}^d)$. Furthermore we assume that $\inf_{(\theta,x,y)\in\Theta\times\mathbb{R}^{d}\times\mathsf{Y}}G_{\theta}(x,y)>0$.
\end{hypA}

\begin{hypA}\label{ass:3}
For each $l\in\mathbb{N}$ the Markov kernels $K_l$ and $\check{K}_l$ admit an invariant measure for which
they are reversible with respect to.  Moreover they are uniformly ergodic with 1-step minorization conditions that are uniform in $l$.
\end{hypA}

The assumptions above are made for the following reasons.  (A\ref{ass:1}) has been made in \cite{po_mv} except when there is no parameter. We make the assumption as many of our proofs rely on the work in \cite{po_mv}. The constants are made uniform in $\theta$ as this allows us to pass seemlessly from the results in \cite{po_mv} to this paper.  They would often only occur if $\Theta$ is a compact space.   (A\ref{ass:2}-\ref{ass:3}) are used in \cite{jasra_bpe_sde} as again our proof strategy uses some of the results of the afore-mentioned article;  we refer the reader there for further discussion.

Throughout $C$ is a finite constant whose value may change upon appearance.  Any dependencies on simulation or model parameters will be clear from the statement of each result.  We will now give a series of technical results which should be read in order and are used to prove Theorem \ref{theo:main_res}.

\begin{lem}\label{lem:lem1}
Assume (A\ref{ass:1}-\ref{ass:2}).  Then
for any  $\varphi\in\mathcal{C}_b^2(\Theta\times\mathbb{R}^{dT})\cap\mathcal{B}_b(\Theta\times\mathbb{R}^{dT})$ there exists a $C<+\infty$
such that for any $(l_{\star},L,N_{l_{\star}},\dots,N_L)\in\mathbb{N}^{L-l_{\star}+3}$ with $l_{\star}<L$
$$
\left|\sum_{l=l_{\star}+1}^L
\left(
\overline{\pi}^{l,N_l}(\varphi) - 
\overline{\pi}^{l-1,N_l}(\varphi) -
\left\{\overline{\pi}^{l}(\varphi) - 
\overline{\pi}^{l-1}(\varphi)
\right\}
\right)
\right|
\leq C\sum_{l=l_{\star}+1}^L\frac{\Delta_l^{1/2}}{N_l^{1/2}}.
$$
\end{lem}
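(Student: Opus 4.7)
By the triangle inequality it suffices to establish the per-summand estimate
$$
\bigl|[\overline{\pi}^{l,N_l}(\varphi) - \overline{\pi}^l(\varphi)] - [\overline{\pi}^{l-1,N_l}(\varphi) - \overline{\pi}^{l-1}(\varphi)]\bigr| \leq C\,\frac{\Delta_l^{1/2}}{N_l^{1/2}}
$$
uniformly in $l$; summing then gives the result. Denote by $\gamma^{l,N_l}(\varphi)$ and $\gamma^{l}(\varphi)$ the unnormalised versions of $\overline{\pi}^{l,N_l}(\varphi)$ and $\overline{\pi}^{l}(\varphi)$ respectively (and similarly at level $l-1$). Under (A\ref{ass:2}) the observation densities $G_{\theta}$ are bounded above and below, so all four normalising constants $\gamma^{\cdot,\cdot}(1)$ lie in a compact interval away from $0$ and $\infty$ uniformly in $l$ and $N_l$.

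The plan is then as follows. First, using the elementary identity $a/b - c/d = (a-c)/b - (c/bd)(b-d)$ applied twice, decompose the per-summand quantity above into a bounded linear combination of terms of the shape
$$
\bigl[\gamma^{l,N_l}(\psi) - \gamma^{l}(\psi)\bigr] - \bigl[\gamma^{l-1,N_l}(\psi) - \gamma^{l-1}(\psi)\bigr]
$$
for $\psi\in\{\varphi,\,1\}$, together with products of bounded factors with individual biases $\gamma^{l,N_l}(\psi) - \gamma^{l}(\psi)$; the latter are $O(1/N_l)$ in absolute value under (A\ref{ass:1})--(A\ref{ass:2}) (standard propagation-of-chaos bounds for the McKean--Vlasov particle method, cf.~\cite{po_mv}), which is already stronger than $\Delta_l^{1/2}/N_l^{1/2}$ once they are multiplied together (since the cross terms cancel in the expansion). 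The heart of the matter is therefore to show
$$
\bigl|[\gamma^{l,N_l}(\psi) - \gamma^{l}(\psi)] - [\gamma^{l-1,N_l}(\psi) - \gamma^{l-1}(\psi)]\bigr| \leq C\,\frac{\Delta_l^{1/2}}{N_l^{1/2}}
$$
for $\psi\in\{\varphi,1\}$. To achieve this, we couple the particle systems at levels $l$ and $l-1$ via Algorithm \ref{alg:basic_method_coup}, i.e.\ using synchronous Brownian increments so that the level-$(l{-}1)$ system is a coarsened version of the level-$l$ system driven by identical noise and by $N_l$-particle laws. Plugging these coupled particle laws into the respective unnormalised posteriors, the difference above becomes a single expectation $\check{\mathbb{E}}_{\theta}^{N_l}$ of an increment that, by smoothness of $a_\theta,\xi_\theta,\sigma,G_\theta$ in (A\ref{ass:1})--(A\ref{ass:2}), admits a Taylor expansion in the particle-law displacement and in the Euler--Maruyama strong-error displacement.

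A Cauchy--Schwarz splitting of this expectation then separates the two sources of error: (i) the synchronous Euler--Maruyama strong error between levels $l$ and $l-1$, which is $O(\Delta_l^{1/2})$ in $L^2$ by standard arguments for coupled Euler--Maruyama schemes of McKean--Vlasov SDEs (as quantified in \cite{po_mv}), and (ii) the particle-sampling error in approximating $\mu_{\cdot,\theta}^l$ and $\mu_{\cdot,\theta}^{l-1}$ by the empirical measures from Algorithm \ref{alg:basic_method_coup}, which is $O(N_l^{-1/2})$ in $L^2$ by propagation of chaos. Multiplying yields the required $\Delta_l^{1/2}/N_l^{1/2}$ rate. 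The main obstacle is the Cauchy--Schwarz splitting itself: one has to argue that, after the Taylor expansion, each increment contains a genuine product of a ``level-difference'' factor and a ``particle-variance'' factor, which requires that the \emph{same} Brownian increments drive both systems and that the bias terms $\gamma^{\cdot,N_l}(\psi) - \gamma^{\cdot}(\psi)$ do not survive in the leading order; this is exactly where the synchronous coupling of Algorithm \ref{alg:basic_method_coup} and the uniform-in-$\theta$ constants in (A\ref{ass:1}) are crucial. Assembling these estimates and summing over $l\in\{l_\star+1,\dots,L\}$ gives the stated bound.
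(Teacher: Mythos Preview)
Your overall strategy matches the paper's: reduce to a per-summand bound, write each $\overline{\pi}$ as a ratio of unnormalised quantities, decompose the difference of four ratios algebraically, and handle the leading ``difference-of-differences'' term via a synchronous coupling together with Cauchy--Schwarz and the strong-error results of \cite{po_mv}. The paper carries this out with an explicit six-term identity for $\tfrac{a}{A}-\tfrac{b}{B}-\tfrac{c}{C}+\tfrac{d}{D}$, and for the main term $\tfrac{1}{A}(a-b-c+d)$ couples \emph{four} processes on a common Brownian motion: the particle-driven Euler schemes at levels $l$ and $l-1$ (laws from Algorithm~\ref{alg:basic_method_coup}) \emph{and} the exact-law Euler schemes at levels $l$ and $l-1$. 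You should make this four-way coupling explicit, since $\gamma^{l}(\psi)$ and $\gamma^{l-1}(\psi)$ involve the true discretised laws, not the particle ones; merely coupling the two particle systems does not let you write the full increment as a single expectation.

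There is, however, a genuine gap in your treatment of the cross terms. You assert that the remaining pieces are ``products of bounded factors with individual biases $\gamma^{l,N_l}(\psi)-\gamma^{l}(\psi)$'' of size $O(1/N_l)$, and that this alone is ``already stronger than $\Delta_l^{1/2}/N_l^{1/2}$''. This is false as stated: a bound of order $1/N_l$ (even if granted) satisfies $1/N_l\le \Delta_l^{1/2}/N_l^{1/2}$ only when $N_l\ge \Delta_l^{-1}$, whereas the lemma must hold for all $N_l$. What actually happens, and what the paper's identity makes transparent, is that every cross term is a \emph{product} of a particle-error factor and a level-difference factor, for instance $\tfrac{1}{AC}(A-C)(c-d)$, where $A-C$ is $O(N_l^{-1/2})$ by \cite[Lemma~A.7]{po_mv} and $c-d$ is $O(\Delta_l)$ by a weak-error bound, giving $O(\Delta_l N_l^{-1/2})\le C\,\Delta_l^{1/2}N_l^{-1/2}$. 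Your iterated two-term identity eventually produces the same structure, but only after a \emph{further} decomposition of terms such as $(\gamma^{l-1,N_l}(\psi)-\gamma^{l-1}(\psi))\bigl[1/\gamma^{l,N_l}(1)-1/\gamma^{l-1,N_l}(1)\bigr]$, in which the second bracket must itself be shown to be $O(\Delta_l)$. Without that extra step the bound does not close.
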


\begin{proof}
We can write
\begin{eqnarray}
\overline{\pi}^l(\varphi) & = & \frac{\int_{\Theta}\overline{P}_{T,\theta}^l\left(
\varphi\prod_{k=1}^TG_{k,\theta}
\right)\nu(\theta)d\theta}{\int_{\Theta}\overline{P}_{T,\theta}^l\left(\prod_{k=1}^TG_{k,\theta}
\right)\nu(\theta)d\theta} \nonumber\\
\overline{\pi}^{l,N_l}(\varphi) & = & \frac{\int_{\Theta}
\overline{\mathbb{E}}_{\theta}^{N_l}\left[
\overline{P}_{T,\theta}^{l,N_l}\left(
\varphi\prod_{k=1}^TG_{k,\theta}
\right)\right]\nu(\theta)d\theta}{\int_{\Theta}
\overline{\mathbb{E}}_{\theta}^{N_l}\left[
\overline{P}_{T,\theta}^{l,N_l}\left(\prod_{k=1}^TG_{k,\theta}
\right)\nu(\theta)\right]d\theta}\label{eq:stoch_post}
\end{eqnarray}
where
\begin{eqnarray*}
\overline{P}_{T,\theta}^l\left(
\varphi\prod_{k=1}^TG_{k,\theta}
\right) & := &
\int_{\Theta\times\mathsf{E}_l^T}\varphi(\theta,x_{1:T})
\left\{\prod_{k=1}^T G_{\theta}(x_k,y_k)
\right\}\prod_{k=1}^T \overline{P}_{\mu_{k-1,\theta}^l,k,\theta}^l(x_{k-1},du_k) \\
\overline{P}_{T,\theta}^{l,N_l}\left(
\varphi\prod_{k=1}^TG_{k,\theta}
\right) & := &
\int_{\Theta\times\mathsf{E}_l^T}\varphi(\theta,x_{1:T})
\left\{\prod_{k=1}^T G_{\theta}(x_k,y_k)\right\}
\prod_{k=1}^T \overline{P}_{\mu_{k-1,\theta}^{l,N_l},k,\theta}^l(x_{k-1},du_k).
\end{eqnarray*}
Therefore we have
$$
\left|\sum_{l=l_{\star}+1}^L
\left(
\overline{\pi}^{l,N_l}(\varphi) - 
\overline{\pi}^{l-1,N_l}(\varphi) -
\left\{\overline{\pi}^{l}(\varphi) - 
\overline{\pi}^{l-1}(\varphi)
\right\}
\right)
\right|
\leq 
$$
$$
\sum_{l=l_{\star}+1}^L
\Bigg|
\frac{\int_{\Theta}
\overline{\mathbb{E}}_{\theta}^{N_l}\left[
\overline{P}_{T,\theta}^{l,N_l}\left(
\varphi\prod_{k=1}^TG_{k,\theta}
\right)\right]\nu(\theta)d\theta}{\int_{\Theta}
\overline{\mathbb{E}}_{\theta}^{N_l}\left[
\overline{P}_{T,\theta}^{l,N_l}\left(\prod_{k=1}^TG_{k,\theta}
\right)\right]\nu(\theta)d\theta} -
\frac{\int_{\Theta}
\overline{\mathbb{E}}_{\theta}^{N_l}\left[
\overline{P}_{T,\theta}^{l-1,N_l}\left(
\varphi\prod_{k=1}^TG_{k,\theta}
\right)\right]\nu(\theta)d\theta}{\int_{\Theta}
\overline{\mathbb{E}}_{\theta}^{N_l}\left[
\overline{P}_{T,\theta}^{l-1,N_l}\left(\prod_{k=1}^TG_{k,\theta}
\right)\right]\nu(\theta)d\theta}
-
$$
$$
\left\{ \frac{\int_{\Theta}\overline{P}_{T,\theta}^l\left(
\varphi\prod_{k=1}^TG_{k,\theta}
\right)\nu(\theta)d\theta}{\int_{\Theta}\overline{P}_{T,\theta}^l\left(\prod_{k=1}^TG_{k,\theta}
\right)\nu(\theta)d\theta} -
\frac{\int_{\Theta}\overline{P}_{T,\theta}^{l-1}\left(
\varphi\prod_{k=1}^TG_{k,\theta}
\right)\nu(\theta)d\theta}{\int_{\Theta}\overline{P}_{T,\theta}^{l-1}\left(\prod_{k=1}^TG_{k,\theta}
\right)\nu(\theta)d\theta}
\right\}
\Bigg|.
$$
We can consider each summand on the right hand side (R.H.S.) individually.
Recall the identity for real numbers $A,B,C,D,a,b,c,d$ with $A,B,C,D$ non-zero:
\begin{equation}
\begin{split}
    \frac{a}{A} - \frac{b}{B} - \frac{c}{C} + \frac{d}{D} =\,& \frac{1}{A}(a-b-c+d) -\frac{b}{AB}(A-B-C+D) - \frac{1}{AC}(A-C)(c-d)\nonumber\\
    -\,&\frac{1}{AB}(C-D)(b-d)+\frac{d}{CBD}(B-D)(C-D) + \frac{d}{ACB}(A-C)(C-D). \label{eq:diff_ratio}\\
\end{split}
\end{equation}
To continue with the proof,  we consider only terms of the type
$$
 \frac{1}{A}(a-b-c+d) \quad\textrm{and}\quad \frac{1}{AC}(A-C)(c-d).
$$
All the other terms can be dealt with using a similar approach.

Then we have to consider
$$
\Bigg(\int_{\Theta}
\overline{\mathbb{E}}_{\theta}^{N_l}\left[
\overline{P}_{T,\theta}^{l,N_l}\left(\prod_{k=1}^TG_{k,\theta}
\right)\nu(\theta)\right]d\theta\Bigg)^{-1}\times
$$
$$
\Bigg\{\int_{\Theta}
\overline{\mathbb{E}}_{\theta}^{N_l}\left[
\overline{P}_{T,\theta}^{l,N_l}\left(
\varphi\prod_{k=1}^TG_{k,\theta}
\right)\right]\nu(\theta)d\theta -
\int_{\Theta}
\overline{\mathbb{E}}_{\theta}^{N_l}\left[
\overline{P}_{T,\theta}^{l-1,N_l}\left(
\varphi\prod_{k=1}^TG_{k,\theta}
\right)\right]\nu(\theta)d\theta -
$$
\begin{equation}\label{eq:prf1}
\left(
\int_{\Theta}\overline{P}_{T,\theta}^l\left(
\varphi\prod_{k=1}^TG_{k,\theta}
\right)\nu(\theta)d\theta -
\int_{\Theta}\overline{P}_{T,\theta}^{l-1}\left(
\varphi\prod_{k=1}^TG_{k,\theta}
\right)\nu(\theta)d\theta
\right)
\Bigg\}.
\end{equation}
Now consider the situation where we run four processes.  The first two are 
synchronously coupled Euler-Maruyama discretizations
of \eqref{eq:sde} at levels $l$ and $l-1$ where the laws have been approximated using
Algorithm \ref{alg:basic_method_coup} ($N_l$ samples) and the other two processes are 
synchronously coupled
 exact Euler-Maruyama discretizations
of \eqref{eq:sde} at levels $l$ and $l-1$, critically where the Brownian motions are all the same for all four processes.   We write the processes at time $t$ as $(X_t^{l,N_l},X_t^{l-1,N_l},X_t^{l},X_t^{l-1})$ on writing
expectations w.r.t.~the afore-mentioned process as $\widetilde{\mathbb{E}}_{\theta}^{l,N_l}$ we have that the numerator
of \eqref{eq:prf1} can be rewritten as
$$
\int_{\Theta}
\widetilde{\mathbb{E}}_{\theta}^{l,N_l}\Bigg[
\varphi(\theta,X_{1:T}^{l,N_l})\prod_{k=1}^T G_{k}(X_k^{l,N_l},y_k) - 
\varphi(\theta,X_{1:T}^{l-1,N_l})\prod_{k=1}^T G_{k}(X_k^{l-1,N_l},y_k) - 
$$
\begin{equation}\label{eq:prf2}
\left\{
\varphi(\theta,X_{1:T}^{l})\prod_{k=1}^T G_{k}(X_k^{l},y_k) - 
\varphi(\theta,X_{1:T}^{l-1})\prod_{k=1}^T G_{k}(X_k^{l-1},y_k)
\right\}
\Bigg]
\nu(\theta)d\theta
\end{equation}
Then we can use the fact that the function $\varphi(\theta,x_{1:T})\prod_{k=1}^T G_{k}(x_k,y_k)$ is in the
collection of functions that are $\mathcal{C}_b^2(\Theta\times\mathbb{R}^{dT})\cap\mathcal{B}_b(\Theta\times\mathbb{R}^{dT})$
to yield that \eqref{eq:prf2} is upper-bounded by (see e.g.~\cite[Lemma A.5.]{po_mv})
$$
C\left(\int_{\Theta}
\widetilde{\mathbb{E}}_{\theta}^{l,N_l}\Bigg[\|
(X_{1:T}^{l,N_l} - X_{1:T}^{l-1,N_l}) - 
(X_{1:T}^{l} - X_{1:T}^{l-1})
\| + \|X_{1:T}^{l} - X_{1:T}^{l-1}\|\{
\|X_{1:T}^{l,N_l}-X_{1:T}^{l}\| + 
\|X_{1:T}^{l-1,N_l}-X_{1:T}^{l-1}\|\}
\bigg]
\nu(\theta)d\theta
\right).
$$
Then in conjunction with the Cauchy-Schwarz inequality,  one can use \cite[Lemma A.7]{po_mv} 
and (A\ref{ass:2}) to deduce that
\eqref{eq:prf1} is upper-bounded by
$$
C\frac{\Delta_l^{1/2}}{N_l^{1/2}}.
$$

The next term that we consider is
$$
\left(
\left\{\int_{\Theta}
\overline{\mathbb{E}}_{\theta}^{N_l}\left[
\overline{P}_{T,\theta}^{l,N_l}\left(\prod_{k=1}^TG_{k,\theta}
\right)\right]\nu(\theta)d\theta\right\}\left\{
\int_{\Theta}\overline{P}_{T,\theta}^l\left(\prod_{k=1}^TG_{k,\theta}
\right)\nu(\theta)d\theta\right\}
\right)^{-1}\times
$$
$$
\left(
\int_{\Theta}
\overline{\mathbb{E}}_{\theta}^{N_l}\left[
\overline{P}_{T,\theta}^{l,N_l}\left(\prod_{k=1}^TG_{k,\theta}
\right)\right]\nu(\theta)d\theta - 
\int_{\Theta}\overline{P}_{T,\theta}^l\left(\prod_{k=1}^TG_{k,\theta}
\right)\nu(\theta)d\theta
\right)\times
$$
\begin{equation}\label{eq:prf3}
\left(
\int_{\Theta}\overline{P}_{T,\theta}^l\left(
\varphi\prod_{k=1}^TG_{k,\theta}
\right)\nu(\theta)d\theta -
\int_{\Theta}\overline{P}_{T,\theta}^{l-1}\left(
\varphi\prod_{k=1}^TG_{k,\theta}
\right)\nu(\theta)d\theta
\right).
\end{equation}
Using (A\ref{ass:2}) the denominator of \eqref{eq:prf3} is lower-bounded,  so on using the representation
of the processes $(X_t^{l,N_l},X_t^{l-1,N_l},$ $X_t^{l},X_t^{l-1})$ as above,  we have that
 \eqref{eq:prf3}  is upper-bounded by
$$
C
\int_{\Theta}\left|
\widetilde{\mathbb{E}}_{\theta}^{l,N_l}\Bigg[
\prod_{k=1}^T G_{k}(X_k^{l,N_l},y_k) - 
\prod_{k=1}^T G_{k}(X_k^{l},y_k)\Bigg]\right|\nu(\theta)d\theta\times
$$
$$
\Bigg|
\int_{\Theta}\overline{P}_{T,\theta}^l\left(
\varphi\prod_{k=1}^TG_{k,\theta}
\right)\nu(\theta)d\theta -
\int_{\Theta}\overline{P}_{T,\theta}^{l-1}\left(
\varphi\prod_{k=1}^TG_{k,\theta}
\right)\nu(\theta)d\theta
\Bigg|.
$$
The first term can be controlled using
\cite[Lemma A.7]{po_mv} and the second by using well-known results for weak errors of diffusion
models to give that \eqref{eq:prf3}  is upper-bounded by
$$
C\frac{\Delta_l}{N_l^{1/2}}.
$$
As noted the other terms can be handeled in a similar manner,  which concludes the proof.
\end{proof}

\begin{lem}\label{lem:lem2}
Assume (A\ref{ass:1}-\ref{ass:2}).  Then
for any  $\varphi\in\mathcal{C}_b^2(\Theta\times\mathbb{R}^{dT})\cap\mathcal{B}_b(\Theta\times\mathbb{R}^{dT})$ there exists a $C<+\infty$
such that for any $(l,N_{l})\in\mathbb{N}^{2}$ 
$$
\left|
\overline{\pi}^{l,N_l}(\varphi) - 
\overline{\pi}^{l}(\varphi) 
\right|
\leq \frac{C}{N_l^{1/2}}.
$$
\end{lem}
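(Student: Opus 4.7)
The plan is to mimic the strategy of Lemma \ref{lem:lem1}, but since we no longer have a multilevel difference to exploit, we only need a single coupling argument and the rate $N_l^{-1/2}$ (not $\Delta_l^{1/2}/N_l^{1/2}$) is what the telescoping machinery will deliver. First I would write both $\overline{\pi}^{l}(\varphi)$ and $\overline{\pi}^{l,N_l}(\varphi)$ in the ratio form displayed in \eqref{eq:stoch_post}, i.e.\ as
$$
\overline{\pi}^l(\varphi) = \frac{\int_{\Theta}\overline{P}_{T,\theta}^l(\varphi \prod_{k=1}^T G_{k,\theta})\nu(\theta)d\theta}{\int_{\Theta}\overline{P}_{T,\theta}^l(\prod_{k=1}^T G_{k,\theta})\nu(\theta)d\theta},
\qquad
\overline{\pi}^{l,N_l}(\varphi) = \frac{\int_{\Theta}\overline{\mathbb{E}}_{\theta}^{N_l}[\overline{P}_{T,\theta}^{l,N_l}(\varphi \prod_{k=1}^T G_{k,\theta})]\nu(\theta)d\theta}{\int_{\Theta}\overline{\mathbb{E}}_{\theta}^{N_l}[\overline{P}_{T,\theta}^{l,N_l}(\prod_{k=1}^T G_{k,\theta})]\nu(\theta)d\theta}.
$$

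Next I would apply the elementary identity $\tfrac{a}{A}-\tfrac{b}{B}=\tfrac{1}{A}(a-b)+\tfrac{b}{AB}(B-A)$ to split the difference into two pieces. Assumption (A\ref{ass:2}) provides a uniform positive lower bound on $G_\theta$ (and hence on both denominators), so it suffices to bound the two numerator-type differences
$$
\int_{\Theta}\Bigl(\overline{\mathbb{E}}_{\theta}^{N_l}\bigl[\overline{P}_{T,\theta}^{l,N_l}(\varphi \prod_{k=1}^T G_{k,\theta})\bigr] - \overline{P}_{T,\theta}^l(\varphi \prod_{k=1}^T G_{k,\theta})\Bigr)\nu(\theta)d\theta
$$
and the analogous one with $\varphi\equiv 1$.

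Then I would introduce the synchronous coupling used in Lemma \ref{lem:lem1}: run two Euler--Maruyama processes at level $l$ sharing the same Brownian increments, one with the true laws $\mu_{\cdot,\theta}^l$ and one with the particle-approximated laws $\mu_{\cdot,\theta}^{l,N_l}$, giving trajectories $X_{1:T}^l$ and $X_{1:T}^{l,N_l}$. Writing the expectation under this joint law as $\widetilde{\mathbb{E}}_\theta^{l,N_l}$, each of the two integrals above reduces to an expression of the form
$$
\int_\Theta \widetilde{\mathbb{E}}_\theta^{l,N_l}\Bigl[\Psi(\theta,X_{1:T}^{l,N_l})-\Psi(\theta,X_{1:T}^l)\Bigr]\nu(\theta)d\theta,
$$
where $\Psi$ is either $\varphi\prod_{k=1}^T G_{k,\theta}$ or $\prod_{k=1}^T G_{k,\theta}$; in both cases $\Psi$ is bounded and Lipschitz in $x_{1:T}$ uniformly in $\theta$ by (A\ref{ass:1})--(A\ref{ass:2}) and the assumption on $\varphi$. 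So up to a constant the quantity is bounded by $\int_\Theta \widetilde{\mathbb{E}}_\theta^{l,N_l}[\|X_{1:T}^{l,N_l}-X_{1:T}^l\|]\nu(\theta)d\theta$.

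Finally I would invoke the particle-approximation estimate from \cite{po_mv} (the same Lemma A.7 cited in the proof of Lemma \ref{lem:lem1}), which gives $\widetilde{\mathbb{E}}_\theta^{l,N_l}[\|X_{1:T}^{l,N_l}-X_{1:T}^l\|^2]^{1/2}\leq C N_l^{-1/2}$ with $C$ independent of $l$ and $\theta$; Cauchy--Schwarz then yields the claim. The main obstacle here, as in Lemma \ref{lem:lem1}, is the uniformity of the constant in $l$: this is ultimately inherited from \cite{po_mv}, which requires precisely the boundedness and regularity built into (A\ref{ass:1}), together with the uniform-in-$\theta$ bounds stressed in the statement of that assumption.
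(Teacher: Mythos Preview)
Your proposal is correct and follows essentially the same route as the paper: write both quantities as ratios via \eqref{eq:stoch_post}, split with the elementary two-term identity for $\tfrac{a}{A}-\tfrac{b}{B}$, lower-bound the denominators using (A\ref{ass:2}), couple the true and particle-driven level-$l$ Euler schemes on common Brownian increments, and conclude with the $N_l^{-1/2}$ strong-error bound of \cite[Lemma A.7]{po_mv}. The paper's decomposition is the variant $\tfrac{a}{A}-\tfrac{b}{B}=\tfrac{a}{AB}(B-A)+\tfrac{1}{B}(a-b)$ rather than your $\tfrac{1}{A}(a-b)+\tfrac{b}{AB}(B-A)$, but this is an immaterial difference.
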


\begin{proof}
Using the notation of the proof of Lemma \ref{lem:lem1} we have that 
$$
\overline{\pi}^{l,N_l}(\varphi) - 
\overline{\pi}^{l}(\varphi) 
 = 
$$
$$
\frac{\int_{\Theta}
\overline{\mathbb{E}}_{\theta}^{N_l}\left[
\overline{P}_{T,\theta}^{l,N_l}\left(
\varphi\prod_{k=1}^TG_{k,\theta}
\right)\right]\nu(\theta)d\theta}{\int_{\Theta}
\overline{\mathbb{E}}_{\theta}^{N_l}\left[
\overline{P}_{T,\theta}^{l,N_l}\left(\prod_{k=1}^TG_{k,\theta}
\right)\nu(\theta)\right]d\theta \int_{\Theta}\overline{P}_{T,\theta}^l\left(\prod_{k=1}^TG_{k,\theta}
\right)\nu(\theta)d\theta}\times
$$
$$
\left(
\int_{\Theta}\overline{P}_{T,\theta}^l\left(\prod_{k=1}^TG_{k,\theta}
\right)\nu(\theta)d\theta
-
\int_{\Theta}
\overline{\mathbb{E}}_{\theta}^{N_l}\left[
\overline{P}_{T,\theta}^{l,N_l}\left(\prod_{k=1}^TG_{k,\theta}
\right)\nu(\theta)\right]d\theta 
\right) + 
$$
\begin{equation}\label{eq:prf4}
\frac{1}{\int_{\Theta}\overline{P}_{T,\theta}^l\left(\prod_{k=1}^TG_{k,\theta}
\right)\nu(\theta)d\theta
}
\left(
\int_{\Theta}
\overline{\mathbb{E}}_{\theta}^{N_l}\left[
\overline{P}_{T,\theta}^{l,N_l}\left(
\varphi\prod_{k=1}^TG_{k,\theta}
\right)\right]\nu(\theta)d\theta - 
\int_{\Theta}\overline{P}_{T,\theta}^l\left(\varphi\prod_{k=1}^TG_{k,\theta}
\right)\nu(\theta)d\theta
\right).
\end{equation}
The two terms on the R.H.S.~of \eqref{eq:prf4} can be controlled in almost identical ways,  so we just consider the first one.  By using $\varphi$ being upper-bounded and (A\ref{ass:2}) this afore-mentioned term is upper-bounded by 
$$
C\Bigg|
\int_{\Theta}\overline{P}_{T,\theta}^l\left(\prod_{k=1}^TG_{k,\theta}
\right)\nu(\theta)d\theta
-
\int_{\Theta}
\overline{\mathbb{E}}_{\theta}^{N_l}\left[
\overline{P}_{T,\theta}^{l,N_l}\left(\prod_{k=1}^TG_{k,\theta}
\right)\nu(\theta)\right]d\theta 
\Bigg|.
$$
Using the representation
of the processes $(X_t^{l,N_l},X_t^{l-1,N_l},X_t^{l},X_t^{l-1})$ as in the proof of Lemma \ref{lem:lem1} and 
\cite[Lemma A.7]{po_mv} we obtain the upper-bound  $C/N_l^{1/2}$; this suffices to complete the proof.
\end{proof}

\begin{lem}\label{lem:lem3}
Assume (A\ref{ass:1}-\ref{ass:2}).  Then
for any  $\varphi\in\mathcal{C}_b^2(\Theta\times\mathbb{R}^{dT})\cap\mathcal{B}_b(\Theta\times\mathbb{R}^{dT})$ there exists a $C<+\infty$ such that for any $(l,N_l)\in\mathbb{N}^2$
$$
\int_{\Theta\times(\mathsf{E}_l\times\mathsf{E}_{l-1})^T}
\left(\varphi(\theta,x_{1:T}^l)
\left\{\prod_{k=1}^T \check{H}_{k,\theta}(x_k^l,\widetilde{x}_{k}^{l-1})\right\} - 
\varphi(\theta,\widetilde{x}_{1:T}^l)
\left\{\prod_{k=1}^T \check{H}_{k,\theta}(\widetilde{x}_{k}^{l-1},x_k^l)\right\}
\right)^2
\check{\pi}^{l,N_{l}}\left(d(\theta,u_{1:T}^l,\widetilde{u}_{1:T}^{l-1})\right) \leq
$$
$$
C\Delta_l.
$$
\end{lem}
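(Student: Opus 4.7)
The plan is to reduce the claim to a strong-error estimate for synchronously coupled Euler--Maruyama schemes. Specifically, I will (i) bound the squared integrand pointwise by $C\sum_{k=1}^{T}\|x_k^l-\widetilde{x}_k^{l-1}\|^2$ using the smoothness of $\varphi$ and of $\check{H}_{k,\theta}$, (ii) dominate $\check{\pi}^{l,N_l}$ by the underlying coupled dynamics at a uniformly bounded cost using assumption (A\ref{ass:2}), and (iii) invoke the strong-error bound from \cite[Lemma A.7]{po_mv} to get $\mathcal{O}(\Delta_l)$.

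\textbf{Step 1 (Pointwise bound).} Write the integrand (before squaring) as
$$
A \;=\; \varphi(\theta,x_{1:T}^l)\prod_{k=1}^T\check{H}_{k,\theta}(x_k^l,\widetilde{x}_k^{l-1})
- \varphi(\theta,\widetilde{x}_{1:T}^{l-1})\prod_{k=1}^T\check{H}_{k,\theta}(\widetilde{x}_k^{l-1},x_k^l).
$$
From $\check{H}_{k,\theta}(x,x')=2G_\theta(x,y_k)/[G_\theta(x,y_k)+G_\theta(x',y_k)]$, assumption (A\ref{ass:2}) (positivity and $\mathcal{C}^2_b$ regularity of $G_\theta$), one has $\check{H}_{k,\theta}(x,x)=1$, $\check{H}_{k,\theta}$ uniformly bounded above and below on $\Theta\times\mathbb{R}^{2d}$, and
$$
|\check{H}_{k,\theta}(x,x')-\check{H}_{k,\theta}(x',x)|\leq C\|x-x'\|.
$$
Combined with the Lipschitz bound on $\varphi$ implied by $\varphi\in\mathcal{C}_b^2\cap\mathcal{B}_b$, a standard splitting $AP-BQ=A(P-Q)+(A-B)Q$ followed by telescoping the two products and one application of Cauchy--Schwarz yields
$$
A^2 \;\leq\; C\sum_{k=1}^{T}\|x_k^l-\widetilde{x}_k^{l-1}\|^2.
$$

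\textbf{Step 2 (Change of measure).} From the explicit form of $\check{\pi}^{l,N_l}$ in \eqref{eq:main_tar}, its Radon--Nikodym derivative with respect to
$$
\nu(d\theta)\,\check{\mathbb{E}}_\theta^{N_l}\!\left[\prod_{k=1}^T\check{P}_\theta^l\bigl((x_{k-1}^l,\widetilde{x}_{k-1}^{l-1}),d(u_k^l,\widetilde{u}_k^{l-1})\bigr)\right]
$$
is proportional to $\prod_{k=1}^T H_{k,\theta}(x_k^l,\widetilde{x}_k^{l-1})$. By (A\ref{ass:2}) this density is uniformly bounded above (and the normalising constant bounded below) by a finite constant depending only on $T$ and the bounds on $G_\theta$. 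Hence for any non-negative measurable $f$,
$$
\int f\,d\check{\pi}^{l,N_l} \;\leq\; C \int f(\theta,u_{1:T}^l,\widetilde{u}_{1:T}^{l-1})\,\check{\mathbb{E}}_\theta^{N_l}\!\left[\prod_{k=1}^T\check{P}_\theta^l(\cdot,d(u_k^l,\widetilde{u}_k^{l-1}))\right]\nu(\theta)d\theta.
$$

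\textbf{Step 3 (Strong error).} Under this dominating measure, the marginal law of $(x_{1:T}^l,\widetilde{x}_{1:T}^{l-1})$ is exactly that of two synchronously coupled Euler--Maruyama discretisations at levels $l$ and $l-1$ driven by common Brownian increments, with input laws simulated via Algorithm \ref{alg:basic_method_coup}. The strong error estimate of \cite[Lemma A.7]{po_mv} then gives
$$
\check{\mathbb{E}}_\theta^{N_l}\bigl[\|x_k^l-\widetilde{x}_k^{l-1}\|^2\bigr]\;\leq\; C\Delta_l,\qquad k\in\{1,\ldots,T\},
$$
uniformly in $\theta$ and $N_l$. Summing in $k$ and combining with Steps~1--2 yields the claimed bound.

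\textbf{Main obstacle.} The delicate part is Step~3: the two schemes do not only differ by their step-size, but also use particle-approximated laws $\mu^{l,N_l}$ and $\widetilde{\mu}^{l-1,N_l}$ that are themselves coupled in Algorithm \ref{alg:basic_method_coup}. One must verify that the cited $L^2$ strong-error bound remains of order $\Delta_l$ uniformly in $N_l$, which depends crucially on the fact that the particle coupling in Algorithm \ref{alg:basic_method_coup} uses the \emph{same} Brownian increments across levels, so that the McKean--Vlasov drift discrepancy can be absorbed into a Gronwall-type argument together with the synchronous coupling of the signal paths, exactly as in \cite{po_mv}.
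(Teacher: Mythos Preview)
Your proposal is correct and follows essentially the same approach as the paper, which simply refers the reader to \cite[Lemma A.2]{jasra_bpe_sde} with the strong-error input replaced by \cite[Lemma A.7]{po_mv}. Your Steps 1--3 are precisely the ingredients behind those references: a Lipschitz/telescoping bound on the integrand, domination of $\check{\pi}^{l,N_l}$ by the unweighted coupled dynamics via the uniform bounds on $G_\theta$ from (A\ref{ass:2}), and then the $\mathcal{O}(\Delta_l)$ strong error for the synchronously coupled, particle-driven Euler schemes.
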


\begin{proof}
The proof is essentially identical to that of \cite[Lemma A.2.]{jasra_bpe_sde} with the only real difference is that one must use the strong error results that were developed in \cite[Lemma A.7]{po_mv} for the particle driven Euler-Maruyama discretizations that we use in this article.
\end{proof}

Recall that $\mathbb{E}$ denotes the expectation w.r.t.~the process that is described in Section \ref{sec:fa}.

\begin{lem}\label{lem:lem4}
Assume (A\ref{ass:1}-\ref{ass:3}).  Then
for any  $\varphi\in\mathcal{C}_b^2(\Theta\times\mathbb{R}^{dT})\cap\mathcal{B}_b(\Theta\times\mathbb{R}^{dT})$ there exists a $C<+\infty$
such that for any $(l_{\star},L,N_{l_{\star}},I_{l_{\star}},\dots,N_L,I_L)\in\mathbb{N}^{2(L-l_{\star})+4}$ with $l_{\star}<L$
$$
\sum_{l=l_{\star}+1}^L
\mathbb{E}\left[
\left(
\overline{\pi}^{l,N_l,I_l}(\varphi) - \overline{\pi}^{l-1,N_{l},I_l}(\varphi) -
\left\{\overline{\pi}^{l,N_l}(\varphi) - 
\overline{\pi}^{l-1,N_l}(\varphi) \right\}
\right)^2
\right]
\leq C\sum_{l=l_{\star}+1}^L\frac{\Delta_l}{I_l+1}.
$$
\end{lem}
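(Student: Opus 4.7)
By linearity of expectation it suffices to prove that $E_l:=\mathbb{E}[(D_l^{I_l}-D_l)^2]\le C\Delta_l/(I_l+1)$ for each fixed $l\in\{l_\star+1,\dots,L\}$, where $D_l^{I_l}:=\overline{\pi}^{l,N_l,I_l}(\varphi)-\overline{\pi}^{l-1,N_l,I_l}(\varphi)$ and $D_l:=\overline{\pi}^{l,N_l}(\varphi)-\overline{\pi}^{l-1,N_l}(\varphi)$. Fix $l$, let $\pi:=\check{\pi}^{l,N_l}$ and write $\pi^I$ for the empirical measure of the first $I+1$ states of the chain driven by $\check{K}_l$. Set $\varphi_+(z):=\varphi(\theta,x^l_{1:T})H_+(z)$, $H_+(z):=\prod_{k=1}^T\check{H}_{k,\theta}(x^l_k,\widetilde{x}^{l-1}_k)$, and define $\varphi_-,H_-$ by swapping the two arguments of each $\check{H}_{k,\theta}$. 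By \eqref{eq:main_eq} and \eqref{eq:mcmc_bl_est} we have $D_l=\pi(\varphi_+)/\pi(H_+)-\pi(\varphi_-)/\pi(H_-)$ and $D_l^{I_l}$ equals the same ratio with $\pi$ replaced by $\pi^{I_l}$. Introducing $D^\pm_l:=\pi(\varphi_\pm)/\pi(H_\pm)$ and the $\pi$-centered functions $\bar\varphi_\pm:=\varphi_\pm-D^\pm_l H_\pm$, the identity $\pi^I(\varphi_\pm)/\pi^I(H_\pm)-D^\pm_l=\pi^I(\bar\varphi_\pm)/\pi^I(H_\pm)$ together with the exact expansion $1/\pi^I(H_\pm)=1/\pi(H_\pm)-(\pi^I(H_\pm)-\pi(H_\pm))/(\pi(H_\pm)\pi^I(H_\pm))$ yields the decomposition
\[
D_l^{I_l}-D_l=\pi^{I_l}(f_l)+R_l,\qquad f_l:=\frac{\bar\varphi_+}{\pi(H_+)}-\frac{\bar\varphi_-}{\pi(H_-)},
\]
where $R_l$ is a sum of quadratic products involving $\pi^{I_l}(\bar\varphi_\pm)$ and $\pi^{I_l}(H_\pm)-\pi(H_\pm)$ divided by denominators bounded below by (A\ref{ass:2}).

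The crux is the variance estimate $\pi(f_l^2)\le C\Delta_l$. Writing $f_l=(\bar\varphi_+-\bar\varphi_-)/\pi(H_+)-\bar\varphi_-(\pi(H_+)-\pi(H_-))/(\pi(H_+)\pi(H_-))$, applying $(u+v)^2\le 2u^2+2v^2$ and using that $\pi(H_\pm)\ge c>0$ and $\pi(\bar\varphi_-^2)\le C$ (both by (A\ref{ass:2})), the task reduces to showing $\pi((\bar\varphi_+-\bar\varphi_-)^2)\le C\Delta_l$ and $(\pi(H_+)-\pi(H_-))^2\le C\Delta_l$. Expanding $\bar\varphi_+-\bar\varphi_-=(\varphi_+-\varphi_-)-(D^+_l H_+-D^-_l H_-)$, invoking Lemma~\ref{lem:lem3} with the given $\varphi$ gives $\pi((\varphi_+-\varphi_-)^2)\le C\Delta_l$, while Lemma~\ref{lem:lem3} with $\varphi\equiv 1$ gives $\pi((H_+-H_-)^2)\le C\Delta_l$; Cauchy-Schwarz then yields $|D^+_l-D^-_l|\le C\Delta_l^{1/2}$, so $\pi((D^+_l H_+-D^-_l H_-)^2)\le C\Delta_l$, and both target inequalities follow. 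Combining this with (A\ref{ass:3}), which grants uniform ergodicity and a $1$-step minorization for $\check{K}_l$ uniform in $l$, the standard variance bound for ergodic averages of bounded $\pi$-centered functionals of a uniformly ergodic chain yields $\mathbb{E}[\pi^{I_l}(f_l)^2]\le C\,\pi(f_l^2)/(I_l+1)\le C\Delta_l/(I_l+1)$.

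It remains to control $R_l$. Each of the factors $\pi^{I_l}(\bar\varphi_\pm)$ and $\pi^{I_l}(H_\pm)-\pi(H_\pm)$ is an ergodic average of a bounded $\pi$-centered function, hence has $L^4$-norm of order $(I_l+1)^{-1/2}$ under (A\ref{ass:3}); moreover, grouping the summands of $R_l$ via $U_+V_+-U_-V_-=(U_+-U_-)V_++U_-(V_+-V_-)$ exposes the coupled differences $U_+-U_-=\pi^{I_l}(\bar\varphi_+-\bar\varphi_-)$ and $V_+-V_-=\pi^{I_l}(H_+-H_-)-\pi(H_+-H_-)$, whose integrands have $\pi$-second moment $O(\Delta_l)$ by the same two applications of Lemma~\ref{lem:lem3}. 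Cauchy-Schwarz together with the fourth-moment bound for ergodic averages of bounded centered functionals of a uniformly ergodic chain then gives $\mathbb{E}[R_l^2]\le C\Delta_l/(I_l+1)^2$, which is subsumed by the main-term bound. Summing over $l$ yields the claim. The main obstacle is the variance estimate $\pi(f_l^2)\le C\Delta_l$: it requires peeling off the self-normalization biases via the $\bar\varphi_\pm$ and then exploiting Lemma~\ref{lem:lem3} twice so that the $\Delta_l$ gain from the level coupling is transferred to both the numerator and the denominator differences.
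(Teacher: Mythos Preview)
Your argument is sound and rests on the same two pillars as the paper's proof: Lemma~\ref{lem:lem3} to extract the $\Delta_l$ rate from the coupled weights, and (A\ref{ass:3}) to extract the $(I_l+1)^{-1}$ rate from the ergodic averages. The paper organizes the algebra differently: instead of linearizing each self-normalized ratio into a leading term $\pi^{I_l}(f_l)$ plus a quadratic remainder $R_l$, it applies the six-term identity \eqref{eq:diff_ratio} directly to $a/A-b/B-c/C+d/D$ with $(a,A,b,B)$ the $\pi^{I_l}$-averages of $(\varphi_+,H_+,\varphi_-,H_-)$ and $(c,C,d,D)$ the corresponding $\check\pi^{l,N_l}$-expectations, then invokes \cite[Proposition~A.1]{jasra_bpe_sde} together with Lemma~\ref{lem:lem3} on each of the six pieces. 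The advantage of that route is that every piece is already a product of a bounded factor and either a single centered ergodic average whose integrand has $\check\pi^{l,N_l}$-second moment $O(\Delta_l)$, or a deterministic $O(\Delta_l^{1/2})$ factor times an $O(1)$-variance ergodic average; the bound $C\Delta_l/(I_l+1)$ then falls out using only second-moment ergodic estimates, with no separate remainder to chase.

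Your remainder treatment is the one place to tighten. The expansion of $R_l$ also carries the random-denominator difference $1/(\pi(H_+)\pi^{I_l}(H_+))-1/(\pi(H_-)\pi^{I_l}(H_-))$, which you do not explicitly address; and the claimed rate $\mathbb{E}[R_l^2]\le C\Delta_l/(I_l+1)^2$ via Cauchy--Schwarz and fourth moments is slightly optimistic, since $\mathbb{E}\big[\big(\pi^{I_l}(\bar\varphi_+-\bar\varphi_-)\big)^4\big]$ contains, besides $C\Delta_l^2/(I_l+1)^2$, a term of order $\Delta_l/(I_l+1)^3$ coming from $\check\pi^{l,N_l}(|\bar\varphi_+-\bar\varphi_-|^4)\le C\Delta_l$, which after Cauchy--Schwarz does not beat $\Delta_l/(I_l+1)$ uniformly in $I_l$. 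Both points are easily repaired: replacing the Cauchy--Schwarz step by the trivial bound $|V_+|\le C$, and handling the denominator piece by one further add-and-subtract, yields $\mathbb{E}[R_l^2]\le C\Delta_l/(I_l+1)$ with only second-moment ergodic bounds, which is all the lemma requires.
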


\begin{proof}
This can be proved by using the representation \eqref{eq:stoch_post} and the identity \eqref{eq:diff_ratio}.
By using the $C_2-$inequality 6 times then one has 6 quantities to bound.  The difference of difference formulae can be treated using  \cite[Propositon A.1.]{jasra_bpe_sde} and Lemma \ref{lem:lem3}.  The other terms can be controlled using standard results for uniformly ergodic Markov chains and the calculations in the proof of Lemma \ref{lem:lem1}.  As the computations are more-or-less repetitive,  they are omitted for brevity.
\end{proof}

\begin{lem}\label{lem:lem5}
Assume (A\ref{ass:1}-\ref{ass:3}).  Then
for any  $\varphi\in\mathcal{C}_b^2(\Theta\times\mathbb{R}^{dT})\cap\mathcal{B}_b(\Theta\times\mathbb{R}^{dT})$ there exists a $C<+\infty$
such that for any $(l_{\star},L,N_{l_{\star}},I_{l_{\star}},\dots,N_L,I_L)\in\mathbb{N}^{2(L-l_{\star})+4}$ with $l_{\star}<L$
$$
\Bigg|\sum_{(l,q)\in\mathsf{A}_{l_{\star},L}}
\mathbb{E}\left[
\overline{\pi}^{l,N_l,I_l}(\varphi) - \overline{\pi}^{l-1,N_{l},I_l}(\varphi) -
\left\{\overline{\pi}^{l,N_l}(\varphi) - 
\overline{\pi}^{l-1,N_l}(\varphi) \right\}
\right]\times
$$
$$
\mathbb{E}\left[
\overline{\pi}^{q,N_q,I_q}(\varphi) - \overline{\pi}^{q-1,N_{q},I_q}(\varphi) -
\left\{\overline{\pi}^{q,N_q}(\varphi) - 
\overline{\pi}^{q-1,N_q}(\varphi) \right\}
\right]\Bigg|
\leq 
C\sum_{(l,q)\in\mathsf{A}_{l_{\star},L}}\frac{\Delta_l^{1/2}}{I_l+1}\frac{\Delta_q^{1/2}}{I_q+1}
$$
where $\mathsf{A}_{l_{\star},L}=\{(l,q)\in\{l_{\star}+1,\dots,L\}:l\neq q\}$.
\end{lem}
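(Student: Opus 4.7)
The crucial first observation is that by the prescription of Section \ref{sec:fa}, the Markov chains at distinct levels $l$ and $q$ are run independently and use independent particle systems from Algorithm \ref{alg:basic_method_coup}. Setting $\Delta_{l,I_l} := \overline{\pi}^{l,N_l,I_l}(\varphi) - \overline{\pi}^{l-1,N_l,I_l}(\varphi) - \{\overline{\pi}^{l,N_l}(\varphi) - \overline{\pi}^{l-1,N_l}(\varphi)\}$, this independence means the terms of the sum are already products of expectations, and the whole problem reduces to establishing the single-level bias bound
$$
\bigl|\mathbb{E}[\Delta_{l,I_l}]\bigr| \leq C \frac{\Delta_l^{1/2}}{I_l+1}
$$
for every $l\in\{l_\star+1,\dots,L\}$, after which summing the product over $(l,q)\in\mathsf{A}_{l_\star,L}$ yields the target inequality.

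The plan for the single-level bound is to express $\overline{\pi}^{l,N_l}(\varphi) - \overline{\pi}^{l-1,N_l}(\varphi)$ via the representation \eqref{eq:main_eq} as a difference of two ratios whose numerators and denominators are $\check{\pi}^{l,N_l}$-integrals of the functions $\varphi\,\check{H}_\pm$ and $\check{H}_\pm$, where $\check{H}_+$ denotes $\prod_{k=1}^T \check{H}_{k,\theta}(x_k^l,\widetilde{x}_k^{l-1})$ and $\check{H}_-$ its transpose. The MCMC version from \eqref{eq:mcmc_bl_est} replaces each such integral by the corresponding empirical average. Applying the difference-of-ratios identity \eqref{eq:diff_ratio} to the pair $(a_I/A_I - b_I/B_I) - (a/A - b/B)$ decomposes $\Delta_{l,I_l}$ into six terms, of which the dominant one is $A^{-1}\bigl\{(a_I - b_I) - (a - b)\bigr\}$: this is the MCMC bias of the function $f := \varphi \check{H}_+ - \varphi \check{H}_-$ relative to $\check{\pi}^{l,N_l}$, normalized by the (bounded, by (A\ref{ass:2})) denominator.

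To bound this main term, I would invoke the sharpened MCMC bias estimate that holds under (A\ref{ass:3}) for reversible, uniformly ergodic chains: via the spectral decomposition,
$$
\bigl|\mathbb{E}[\check{\pi}^{l,N_l,I_l}(f)] - \check{\pi}^{l,N_l}(f)\bigr| \leq \frac{C}{I_l+1}\,\bigl\|f - \check{\pi}^{l,N_l}(f)\bigr\|_{L^2(\check{\pi}^{l,N_l})},
$$
which is the bias analogue of \cite[Proposition A.1]{jasra_bpe_sde}. Applied to $f = \varphi(\check{H}_+ - \check{H}_-)$, Lemma \ref{lem:lem3} provides $\|f\|_{L^2(\check{\pi}^{l,N_l})} \leq C\Delta_l^{1/2}$, giving exactly $C\Delta_l^{1/2}/(I_l+1)$ for the main term. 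The remaining five terms from \eqref{eq:diff_ratio} are either products of two MCMC discrepancies, or an MCMC discrepancy multiplied by a purely deterministic quantity of size $\mathcal{O}(\Delta_l^{1/2})$ coming from the level gap (as in Lemmas \ref{lem:lem1}--\ref{lem:lem2}). Each can be controlled by combining Cauchy--Schwarz with the $L^2$-type MCMC variance bounds from the proof of Lemma \ref{lem:lem4} and the $L^2$ bound of Lemma \ref{lem:lem3}, and none exceeds the $C\Delta_l^{1/2}/(I_l+1)$ rate (in fact several are smaller).

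The principal difficulty is the sharpening of the bias bound: the naive uniformly-ergodic estimate yields only $C\|f\|_\infty/(I_l+1) = \mathcal{O}(1/(I_l+1))$, and Jensen's inequality applied to Lemma \ref{lem:lem4} only produces $\mathcal{O}(\Delta_l^{1/2}/(I_l+1)^{1/2})$, neither of which suffices. Exploiting reversibility through the spectral gap to obtain the $L^2$-weighted bias bound above is therefore the key technical step, after which the calculations become essentially parallel to those of Lemma \ref{lem:lem4} and are largely routine.
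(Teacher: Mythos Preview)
Your proposal is correct and follows essentially the same route as the paper, which simply defers the argument to \cite[Proposition A.1]{jasra_cont} together with the lemmas already established here; you have accurately reconstructed the skeleton of that argument, namely the reduction to a single-level bias bound, the decomposition via \eqref{eq:diff_ratio}, the use of an $L^2$-weighted MCMC bias estimate under reversibility and the uniform minorization in (A\ref{ass:3}), and the application of Lemma \ref{lem:lem3} to extract the $\Delta_l^{1/2}$ factor. The one point to be careful about when filling in the details is that your spectral bias bound requires the initial law of the chain to have an $L^2(\check{\pi}^{l,N_l})$-density bounded uniformly in $l$, which under (A\ref{ass:2}) and the particle-filter initialization is indeed the case, but should be stated explicitly.
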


\begin{proof}
The proof is essentially the same as \cite[Proposition A.1]{jasra_cont} except with reference to already established results in this article.  As the calculations are repeated across the articles and this work,  we omit them for brevity.
\end{proof}

\end{document}